\def\drawfigures{1}
\newtheorem{theorem}{Theorem}
\newtheorem{lemma}[theorem]{Lemma}
\newtheorem{corollary}[theorem]{Corollary}
\newtheorem{definition}[theorem]{Definition}
\newtheorem{claim}[theorem]{Claim}
\DeclareMathOperator{\E}{\mathbb{E}}
\DeclareMathOperator*{\R}{\mathbb{R}}
\DeclareMathOperator*{\one}{\mathbb{I}}
\DeclareMathOperator*{\argmax}{arg\,max}
\newcommand{\binarydist}{\mathcal{F}^{0/1}}
\newcommand{\mechanism}{\mathcal{M}}
\newcommand{\oracle}{\mathcal{V}}
\newcommand{\cohort}{\mathcal{C}}
\newcommand{\val}{\textup{val}}
\newcommand{\dist}{\textup{dist}}
\newcommand{\adist}{\textup{avdist}}
\newcommand{\pos}{\textup{pos}}
\newcommand{\score}{\textup{sc}}
\newcommand{\pl}{\textup{plu}}
\newcommand{\implied}{\textup{im}}
\newcommand{\art}{\textup{art}}
\newcommand{\impliedsw}{\sw_{\implied}}
\newcommand{\artsw}{\sw_{\art}}
\newcommand{\med}{\textsc{Mean}}
\newcommand{\pick}{\textsc{RtMean}}
\newcommand{\rand}{\textsc{RtSearch}}
\DeclareMathOperator{\sw}{\textup{SW}}
\DeclareMathOperator{\F}{\mathcal{F}}
\DeclareMathOperator{\PP}{\mathcal{P}}
\DeclareMathOperator{\PPi}{\PP}
\DeclareMathOperator*{\boxv}{\textup{Box}}
\DeclareMathOperator*{\rangev}{\textup{Range}}
\DeclareMathOperator*{\balanced}{\textup{Balanced}}
\DeclareMathOperator*{\pbalanced}{\textup{PartiallyBalanced}}
\DeclareMathOperator*{\distinct}{\textup{Distinct}}
\def\DEBUG{true}
\title{\bf Beyond the worst case:\\Distortion in impartial culture electorates}
\author{Ioannis Caragiannis\thanks{Department of Computer Science, Aarhus University, {\AA}bogade 34, 8200 Aarhus N, Denmark. Email: \{\href{mailto:iannis@cs.au.dk}{iannis}, \href{mailto:karl@cs.au.dk}{karl}\}@cs.au.dk.} \and Karl Fehrs\footnotemark[1]}
\date{}
\begin{document}

\maketitle

\begin{abstract}
{\em Distortion} is a well-established notion for quantifying the loss of social welfare that may occur in voting. As voting rules take as input only ordinal information, they are essentially forced to neglect the exact values the agents have for the alternatives. Thus, in worst-case electorates, voting rules may return low social welfare alternatives and have high distortion. Accompanying voting rules with a small number of cardinal queries per agent may reduce distortion considerably.

To explore distortion beyond worst-case conditions, we use a simple stochastic model according to which the values the agents have for the alternatives are drawn independently from a common probability distribution. This gives rise to so-called {\em impartial culture electorates}. We refine the definition of distortion so that it is suitable for this stochastic setting and show that, rather surprisingly, all voting rules have high distortion {\em on average}. On the positive side, for the fundamental case where the agents have random {\em binary} values for the alternatives, we present a mechanism that achieves approximately optimal average distortion by making a {\em single} cardinal query per agent. This enables us to obtain slightly suboptimal average distortion bounds for general distributions using a simple randomized mechanism that makes one query per agent. We complement these results by presenting new tradeoffs between the distortion and the number of queries per agent in the traditional worst-case setting.
\end{abstract}

\section{Introduction}
Voting has been the subject of social choice theory for centuries. Traditionally, having elections as the main application area in mind, social choice theorists have made a lot of progress in understanding the axiomatic properties of {\em voting rules} (e.g., see~\citet{Z16} for an introduction to basic voting axioms). However, the use of voting goes far beyond elections, and today, it provides a compelling way of collective decision-making. So, in addition to the traditional axiomatic treatment of voting rules, other approaches that have an optimization flavour have attracted a lot of attention, starting with the work of Young (see~\citet{Y95} and references therein).

Among others, the utilitarian approach~\citep{BCHLPS15} assumes that voters have values for the alternatives. The preferences expressed in the ballot of a voter are a very short summary of these values, e.g., a ranking of the candidates in terms of their values, a set of candidates with values exceeding a threshold, or just the alternative with the highest value for the voter. A voting rule takes as input the voters' ballots and computes an outcome (e.g., a winning alternative). As it has no access to the underlying values of the voters for the alternatives, the outcome depends only on the short summaries in the voters' ballots. Rather unsurprisingly, the outcome of the voting rule will be sub-optimal if evaluated in terms of the voters' values for the alternatives.

How far can the outcome be from the optimal alternative? To answer this question, we need to specify a measure to assess the quality of alternatives. Following a rather standard approach in the literature, we use the {\em social welfare} ---the total value the agents have for an alternative--- as our quality measure. Then, the following question arises naturally. How far from the optimum can the outcome of a voting rule be in terms of social welfare? The notion of {\em distortion}, introduced by \citet{PR06}, comes to answer this question. The distortion of a voting rule is defined as the worst-case ratio, among all voting profiles on a set of alternatives, between the maximum social welfare among all alternatives and the social welfare of the winning alternative returned by the voting rule.

The distortion of voting rules has been the subject of a long list of papers in computational social choice for more than ten years now; see the nice survey of the key results in the area by~\citet{AFSV21}. For example, under mild assumptions about the valuations, the ubiquitous plurality rule has a distortion of $O(m^2)$, where $m$ is the number of alternatives~\citep{CP11}. Nevertheless, the distortion is inherently high. Even when the values for the alternatives are restricted (e.g., the total value of each agent for all alternatives is normalized to $1$), the distortion of any (possibly randomized) voting rule can be as bad as $\Theta(\sqrt{m})$~\citep{BCHLPS15,EKPS22}.

A recent approach by~\citet{ABFV21} aims to bypass such lower bounds by making very limited use of the underlying cardinal information. Here, besides the ranking submitted as a ballot by an agent, the voting rule (or, better, the {\em mechanism}) can pose queries to the agent regarding her value for particular alternatives. Even though optimal distortion results are now possible (by naively querying the values of all alternatives in each agent's ranking), the important problem to be solved is to design mechanisms that achieve low distortion by making a limited number of queries per agent. Among other results, \citet{ABFV21} present an algorithm that achieves constant distortion by making at most $O(\log^2{m})$ queries per agent. On the negative side, they show that any mechanism that achieves constant distortion must make at least $\Omega\left(\frac{\log{m}}{\log\log{m}}\right)$ per agent. Both results refer to deterministic mechanisms.

By the definition of distortion, the results discussed above are of a worst-case flavour. Voting rules and mechanisms are evaluated on a profile for which they have the worst possible performance, no matter how frequently such a profile may appear in practice. As such, they may not be suitable to explain the success of certain voting rules in practice.

\subsection{Overview of our contribution}
We follow the utilitarian framework but ---motivated by the recent trend of analyzing algorithms from non-worst-case perspectives~\citep{R20}--- attempt an evaluation of voting rules on an {\em average-case} basis. At the conceptual level, we introduce the notion of {\em average distortion} for such stochastic preferences. To this end, we consider a simple stochastic setting in which each agent draws a random value for each alternative according to a common probability distribution. The draws of each agent for each alternative are independent. Naturally, {\em impartial culture electorates}~\citep{PW09,TRG03} ---i.e., profiles with uniformly random rankings of alternatives--- emerge in this way. The average distortion of a mechanism is defined as the expected maximum social welfare among the alternatives over the expected social welfare of the alternative returned by the mechanism. To distinguish between the two notions, we use the term {\em worst-case distortion} to refer to the traditional definition. It is not hard to see that the average distortion is always upper-bounded by the worst-case distortion.

We warm up by showing (in Section~\ref{sec:omega_m_lower_bound_exp_distortion}) that, perhaps surprisingly, any (potentially randomized) voting rule has average distortion $\Omega(m)$, implying that the trivial voting rule that selects one of the alternatives uniformly at random (ignoring the profile) has an almost best possible average distortion. Our lower bound uses a very simple {\em binary} probability distribution (i.e., one that returns values of $1$ and $0$). In light of this seemingly discouraging result, our primary objective is to understand the additional power a limited number of value queries can give to a voting rule by seeking answers to the following questions:

\begin{quote}
Can we design mechanisms that use a few (ideally, a constant number of) queries per agent and have low (ideally, constant) average distortion? What conditions do we need to impose on the distribution from which the agents' valuations are drawn in order to achieve this?
\end{quote}

Notice that we do not intend to introduce other restrictions on the electorates, such as requiring the number of agents $n$ to be large compared to the number of alternatives $m$. Instead, we would like our mechanisms to work for {\em any} choice of $n$ and $m$.

Our first positive result is for the family of binary distributions. In Section~\ref{sec:1-query}, we present a deterministic mechanism, called \med, which makes a {\em single query} per agent and achieves {\em constant} average distortion. This is our most technically involved result and indicates that using queries (even in a minimal way) can lead to a significant improvement in average distortion.

Binary distributions are, of course, an exceedingly crude model for the agents' valuations.  However, we demonstrate the usefulness of our result by employing the mechanism \med\ as a building block for a randomized mechanism, which works for a general distribution $F$. This mechanism, which we call \pick, still requires only one query per agent and achieves an average distortion of $O\left(\log{m}+\log{\frac{\sigma^2}{\mu^2}}\right)$, where $\mu$ and $\sigma^2$ are the mean and variance of the distribution $F$. For many distributions of interest, the average distortion bound obtained is only logarithmic in $m$. A similar idea is used to define our randomized mechanism \rand\ for the traditional model of worst-case distortion. \rand\ uses a logarithmic number of queries per agent and achieves logarithmic worst-case distortion. Such an upper bound is not known for deterministic mechanisms. These results are presented in Section~\ref{sec:general-upper-bounds}. To the best of our knowledge, this is the first analysis of randomized mechanisms that make value queries in the distortion literature.

Our upper bound on the average distortion of mechanism \med\ is not attainable by deterministic mechanisms in the traditional worst-case model. We prove that no deterministic mechanism that makes a single query per agent can achieve distortion better than $\Omega(\sqrt{m})$, even when the valuations are binary. More importantly, for general valuations, we present a new lower bound of $\Omega(\log{m})$ on the number of queries per agent that allows for constant worst-case distortion. This improves the previously best-known lower bound of \citet{ABFV21} by a sublogarithmic factor. These results appear in Section~\ref{sec:lower-bounds}.

\subsection{Further related work}
Using different methodological approaches, utilities in voting have been considered in social choice theory since the work of~\citet{B1780} in the 18th century; for recent work on the topic see~\citet{ABF11,Pivato16}. The stochastic model we follow was introduced by~\citet{BCHLPS15}. Among other investigations, \citeauthor{BCHLPS15} aimed at identifying the optimal voting rule in terms of the expected social welfare of the winning alternative. Their main conclusion is that this voting rule is a positional scoring rule with parameters depending on the probability distribution the agents' values are drawn from. They do not consider notions related to average distortion and do not present related bounds. \citet{GKPSZ23} define a version of distortion (slightly different from ours) in the model of~\citeauthor{BCHLPS15} as follows. For a particular profile, they examine the ratio between the expected maximum social welfare and the expected social welfare of the winning alternative, both terms conditioned on the random values of the alternatives being consistent with the profile. They define the expected distortion as the worst-case ratio over all profiles. It can be verified that, for every voting rule, the value of expected distortion according to~\citet{GKPSZ23} is bounded above and below by the worst-case distortion and our average distortion, respectively. The main result of~\citet{GKPSZ23} is that binomial voting (which, in the terminology of \citet{BCHLPS15}, is essentially the average-case optimal voting rule for a fair Bernoulli trial) approximates the rule with minimum expected distortion for values drawn from distributions supported on $[0,1]$. Both~\citet{BCHLPS15} and~\citet{GKPSZ23} restrict their attention to voting rules and do not consider value queries to the agents.

In the current paper, we assume that agents have non-negative values for the alternatives. In a series of recent papers on {\em metric voting}, originating from the work of~\citet{ABP15}, agents and alternatives are assumed to be located in a metric space, and the preferences of each agent reflect her relative distance from the alternatives. In that different setting, the distortion quantifies the suboptimality of the outcome of voting rules in terms of the {\em social cost}. This line of research has led to challenging algorithmic problems with beautiful solutions~\citep{GHS20,KK22,CR22,CRWW24}. \citet{CDK17,CDK18} and~\citet{GLS19} consider variants of average-case distortion that, due to the metric setting and additional modelling assumptions, are only distantly related to ours.

The study of the utilitarian framework in the recent CS literature goes beyond the study of single-winner voting. \citet{CNPS17,CSV22} and \citet{BNPS21} consider multiwinner voting and participatory budgeting settings, respectively. \citet{FMV20} investigate voting in distributed settings.
The utilitarian approach has also been used in settings that are more general than voting. The main idea is to explore how well algorithms that use only ordinal information about the underlying input (as opposed to cardinal values) can approximate the optimal solutions of combinatorial optimization problems. The survey of~\citet{AFSV21} covers early work on this hot topic, as well as on the other directions discussed above. The benefits of allowing for limited access to the cardinal information were more recently explored for matching~\citep{MML21,LV24,ABFV22a,ABFV22b} and clustering~\citep{BCF+} problems.

We remark that stochastic models are otherwise ubiquitous in the EconCS literature, including a long line of research in mechanism design and auctions originating from the seminal paper of~\citet{M81}, and a rich menu of other settings like matchings~\citep{CIK+09}, fair division~\citep{MS20}, kidney-exchange~\citep{TP15}, and many more.

\section{Preliminaries}\label{sec:prelim}

Throughout the paper, we denote by $N$ and $A$ the sets of agents and alternatives and reserve $n$ and $m$ for their cardinalities, respectively. Each agent $i\in N$ has a {\em ranking} $\succ_i$ of the alternatives, i.e., a strict ordering of the elements in $A$. A {\em profile} $P=\{\succ_i\}_{i\in N}$ is just a collection of the agents' rankings. We denote by $\PP$ the set of all possible profiles with $n$ agents and $m$ alternatives. A voting rule $\mechanism:\PP\rightarrow A$ takes as input a profile and returns a single winning alternative.

We assume that the ranking of each agent results from underlying hidden non-negative {\em values} that the agent has for each alternative. For $i\in N$, the {\em valuation function} $\val_i:A \rightarrow \R_{\geq 0}$ returns the values of agent $i$ for the alternatives in $A$. Then, agent $i$'s ranking $\succ_i$ is {\em consistent} with $\val_i$. This means that $a \succ_i a'$ only if $\val_i(a) \geq \val_i(a')$ for every pair of alternatives $a,a' \in A$. Given a ranking $\succ$, the function $\pos_{\succ}:A \rightarrow [m]$ returns the position of a given alternative in the ranking. We say that alternative $a$ is the {\em top-ranked} alternative of agent $i$ if $\pos_{\succ_i}(a) = 1$.\pagebreak

Consider a set of valuation functions $v = \{\val_i\}_{i\in N}$. We typically refer to the set $v$ as the agents' {\em valuations}. The {\em social welfare} of an alternative $a \in A$ is defined as
$$\sw(a,v) = \sum_{i \in N} \val_i(a).$$

Let $\oracle$ be the set of all possible valuations the agents in $N$ can have for the alternatives in $A$. We denote by $\PP(\val_i)$ the set of rankings that are consistent with the valuation function $\val_i$ of agent $i$. We say that a profile $P=\{\succ_i\}_{i\in N}$ is consistent with the valuations $v$ if the ranking $\succ_i$ of every agent $i\in N$ is consistent with her valuation function $\val_i$, i.e., $\succ_i\in \PP(\val_i)$. Then, $\PP(v)$ represents the set of profiles in $\PP$ which are consistent with the valuations $v$.

Besides voting rules, we consider {\em mechanisms} that have, in addition to the profile, access to the valuations. Such a mechanism 
$\mechanism:\PP\times \oracle \rightarrow A$ takes as input the profile and the valuations and returns a winning alternative. We are particularly interested in mechanisms that use the whole profile on input but only a small part of the valuations by making a limited number of {\em queries} per agent. A query for the value of agent $i\in N$ for alternative $a\in A$ simply returns the value of $\val_i(a)$.

The {\em worst-case distortion} of a mechanism $\mechanism$ applied on profiles consistent with valuations from $\oracle$ is defined as
$$\dist(\mechanism) = \sup_{\substack{v\in \oracle\\P\in \PP(v)}}\frac{\max_{a \in A} \sw(a,v)}{\sw(\mechanism(P,v), v)},$$
i.e., it is the worst-case ratio---among all valuations and consistent profiles---between the maximum social welfare and the social welfare of the alternative returned by the mechanism. For randomized mechanisms, in which the alternative returned is a random variable, 
we use the expectation $\E[\sw(\mechanism(P,v), v)]$ (taken over the random choices of $\mechanism$) instead of $\sw(\mechanism(P,v), v)$ in the denominator.

We extend the notion of distortion to {\em stochastic environments} with random valuations and consistent profiles. We assume that the values of the agents for the alternatives are drawn from a {\em known} common distribution $F$. In particular, for each agent $i\in N$ and alternative $a\in A$, the value $\val_i(a)$ is drawn independently from distribution $F$. Given valuations $v$ selected in this way, a consistent profile $P$ is selected uniformly at random among all profiles in $\PP(v)$ (essentially, in her ranking, each agent breaks ties among alternatives in terms of value uniformly at random). This gives rise to uniformly random profiles, which are known as {\em impartial culture electorates} in the social choice theory literature.

For valuations $v$, we use $v \sim F$ to denote that, for every alternative $a\in A$ and every agent $i\in N$, the value $\val_i(a)$ is drawn independently from $F$. We use the notation $\succ_i\sim \PP(\val_i)$ to refer to a ranking that is selected uniformly at random among all rankings that are consistent with the valuation function $\val_i$ of agent $i$. Similarly, we use $P\sim \PP(v)$ for a profile that is selected uniformly at random among all profiles that are consistent with valuations $v$. Then, the {\em expected social welfare} of the winning alternative picked by a deterministic mechanism $\mechanism$ is
$$\E_{\substack{v\sim F\\ P \sim \PP(v)}}\left[\sw(\mechanism(P,v), v)\right].$$
When mechanism $\mechanism$ is randomized, the expectation is taken over the randomness of $\mechanism$ as well.

We now adapt the notion of distortion to this stochastic setting. The {\em average distortion} of a mechanism $\mechanism$ on a family of distributions $\F$ is
$$\adist(\mechanism,\F)
= \sup_{F\in \F}
\frac{\E_{v\sim F}\left[\max\limits_{a \in A} \sw(a,v)\right]}
{\E_{\substack{v\sim F\\ P \sim \PP(v)}}\Big[\sw(\mechanism(P,v), v)\Big]}\,.$$
We remark that the trivial voting rules that return a fixed alternative or an alternative selected uniformly at random have average distortion at most $m$ for every distribution $F$. Denoting them by $\mechanism$, this is due to the simple fact that
$$\E_{\substack{v\sim F\\ P \sim \PP(v)}}\Big[\sw(\mechanism(P), v)\Big]=\frac{1}{m}\cdot \E_{v\sim F}\left[\sum_{a \in A}{\sw(a,v)}\right]\geq \frac{1}{m}\cdot \E_{v\sim F}\left[\max_{a \in A}{\sw(a,v)}\right].$$
For particular distributions, the average distortion can be much lower. As an example, consider the uniform distribution over the interval $[a,b]$ with $a,b\geq 0$. Then, the expected social welfare of the alternative returned by both trivial rules is $n\cdot (a+b)/2$ while the expected maximum social welfare (over all alternatives) cannot exceed $n\cdot b$. Hence, the average distortion is at most $2$ in this case.

A fundamental family that is of central importance to our work is the family $\binarydist$ of {\em binary distributions}, consisting of the set of probability distributions $\{F_p\}_{p\in [0,1]}$, where $F_p$ is such that the random variable $z$ that is drawn according to $F$ is equal to $1$ with probability $p$ and to $0$ with probability $1-p$.

\section{Average distortion can be high}\label{sec:omega_m_lower_bound_exp_distortion}

We begin our technical exposition with a lower bound of $\Omega(m)$ on the average distortion. Essentially, Theorem~\ref{thm:omega_m_exp_distortion} implies that the trivial voting rules that return a fixed alternative or an alternative selected uniformly at random on every profile have approximately optimal average distortion. 

In our proof, we make use of a particular distribution $F$ from the family $\binarydist$. Furthermore, we exploit a class of voting rules called {\em positional scoring rules}. A positional scoring rule $g$ is defined by a scoring vector $\langle \alpha_1, \alpha_2, ..., \alpha_m\rangle$ with $\alpha_1\geq \alpha_2\geq ...\geq a_m\geq 0$. On input a profile $P$, the voting rule $g$ assigns a score $\score_g(a,P)$ to every alternative $a\in A$ and the winning alternative is the one with the highest score (breaking ties according to some tie-breaking rule). The score of an alternative is computed as follows. For $j=1, ..., m$, alternative $a$ takes $\alpha_j$ points each time it appears in the $j$-th position in an agent's ranking, i.e., $\score_g(a,P)=\sum_{i\in N}{\alpha_{\pos_{\succ_i}(a)}}$. For example, {\em plurality} is the positional 
scoring rule that uses the $m$-entry scoring vector $\langle 1, 0, ..., 0\rangle$. The plurality winner on profile $P$, denoted by $\pl(P)$, is the alternative that appears most often in the top position of the agents' rankings.

\begin{theorem}\label{thm:omega_m_exp_distortion}
For every (possibly randomized) mechanism $\mechanism$, $\adist(\mechanism,\binarydist)\in \Omega(m)$.
\end{theorem}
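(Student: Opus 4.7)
My plan is to exhibit a specific binary distribution $F_p \in \binarydist$ and a number of agents $n$ such that the expected maximum social welfare is $\Omega(1)$ while every mechanism's expected social welfare is $O(1/m)$; concretely, I would take $p = 1/(mn)$ with $n = \Theta(m^2)$, so that $mnp = \Theta(1)$. Since the expected social welfare of a randomized mechanism is a convex combination of the expected welfares of the deterministic mechanisms obtained by fixing its random coins, it suffices to consider deterministic mechanisms; the optimum among these is the Bayesian rule $\mechanism^*(P) = \argmax_a \E_v[\sw(a,v) \mid P]$, whose expected welfare equals $\E_P[\max_a \E_v[\sw(a,v) \mid P]]$. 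So it remains to upper bound this quantity.

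The key structural observation is that, by symmetry of $F_p$ across alternatives and the uniform tie-breaking in consistent rankings, the marginal distribution of each $\succ_i$ is uniform over all $m!$ permutations, and hence $P$ is uniform over $(m!)^n$. A direct calculation shows that conditional on $\succ_i$, the set $S_i := \{a : \val_i(a) = 1\}$ is the prefix of $\succ_i$ of size $|S_i|$, and $|S_i|$ still has its prior distribution $\text{Binom}(m,p)$ (the ranking reveals nothing about the cutoff). Consequently, $\Pr[\val_i(a) = 1 \mid \succ_i] = f(\pos_{\succ_i}(a))$, where $f(j) := \Pr[\text{Binom}(m,p) \geq j]$, so that $\E_v[\sw(a,v) \mid P] = \sum_i f(\pos_{\succ_i}(a))$. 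Thus the Bayesian optimal mechanism is precisely the positional scoring rule with scoring vector $(f(1), \ldots, f(m))$, and the problem reduces to upper bounding $\E_P[\max_a \sum_i f(\pos_{\succ_i}(a))]$ on a uniform profile.

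For $p = 1/m^3$, $f(1) = 1 - (1-p)^m \approx mp = 1/m^2$ dominates, while $\sum_{j \geq 2} f(j) = mp - f(1) = O((mp)^2) = O(1/m^4)$ is negligible. So the score of $a$ is essentially $f(1) \cdot N_a$, where $N_a$ is the plurality count of $a$. For $n = m^2$ and a uniform random profile, standard multinomial concentration (Chernoff plus a union bound over alternatives) gives $\E_P[\max_a N_a] = m + O(\sqrt{m \log m})$, yielding a denominator of at most $f(1) \cdot (m + o(m)) + o(1/m) = O(1/m)$. For the numerator, $\E_v[\max_a \sw(a,v)] \geq \Pr[\max_a \sw(a,v) \geq 1] = 1 - (1-p)^{mn} \geq 1 - 1/e = \Omega(1)$. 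Taking the ratio gives the desired $\Omega(m)$.

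The main obstacle is cleanly handling the contributions of positions $j \geq 2$ in the positional scoring rule to ensure the best such rule cannot do significantly better than a plurality-based rule in this regime; this boils down to Binomial tail estimates for $f(j)$ together with multinomial concentration for the plurality counts $N_a$. Everything else---the reduction to deterministic mechanisms, the Bayesian characterization, and the numerator lower bound---follows from the symmetry of $F_p$ and a direct posterior computation, so the bulk of the argument is a careful parameter choice combined with routine tail and concentration estimates.
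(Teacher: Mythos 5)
Your proposal is correct and takes essentially the same route as the paper: reduce to deterministic mechanisms, observe that the expected social welfare of any deterministic rule equals its expected score under the positional scoring rule whose weight at position $j$ is the posterior probability $\Pr[\text{Binom}(m,p)\ge j]$ (the paper's $\alpha_j$), bound that score by the plurality count times $\alpha_1$ plus the tail $n\sum_{j\ge 2}\alpha_j = n(mp-\alpha_1)$, and control the maximum plurality count by Chernoff plus a union bound. The only difference is a cosmetic parameter choice ($n=\Theta(m^2)$ giving $\max_a N_a = m(1+o(1))$ via tighter concentration, versus the paper's $n\ge 6m\ln m$ with the cruder but simpler bound $\E[\max_a N_a]\le 3n/m$); both deliver the same $\Omega(m)$.
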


\begin{proof}
Consider impartial culture electorates with $m$ alternatives, $n\geq 6m\ln{m}$ agents, and binary values drawn from the probability distribution $F\in \binarydist$ with $p=\frac{1}{nm}$. Then, the probability that some alternative has positive social welfare is $1-\left(1-\frac{1}{nm}\right)^{nm}\geq 1-e^{-1}$. Thus, $\E_{v\sim F}[\max_{a\in A}{\sw(a,v)}]\geq 1-e^{-1}$.
Now consider any voting rule $\mechanism$; we will complete the proof by showing that 
\begin{align}\label{eq:4-over-m}
\E_{\substack{v\sim F\\P\sim \PP(v)}}[\sw(\mechanism(P),v)] &\leq \frac{4}{m}.
\end{align}
Below, we prove inequality~(\ref{eq:4-over-m}), assuming that $\mechanism$ is deterministic. For randomized $\mechanism$, the expectation in the LHS of~(\ref{eq:4-over-m}) should also be taken over the randomness of $\mechanism$ as well. In that case, inequality (\ref{eq:4-over-m}) follows trivially by our arguments, interpreting $\mechanism$ as a probability distribution over deterministic voting rules.

Let $g$ be the positional scoring rule that uses the scoring vector $\langle \alpha_1, \alpha_2, ..., \alpha_m\rangle$ with 
$$\alpha_j=\E_{\substack{\val_i\sim F\\\succ_i\sim \PPi(\val_i)}}[\val_i(a)|\pos_{\succ_i}(a)=j],$$ i.e., score $\alpha_j$ is equal to the expected value according to $F$ given by each agent to the alternative ranked at position $j$.\footnote{\citet{BCHLPS15} refer to this positional scoring rule $g$ as an average-case optimal social choice function.} Now, observe that for a given profile $P=\{\succ_i\}_{i\in N}$ and alternative $a\in A$, we have 
\begin{align*}
\E_{v\sim F}[\sw(a,v)|P\in \PP(v)]
&=\sum_{i\in N}{\E_{\val_i\sim F}[\val_i(a)|\succ_i \in \PP(\val_i)]}
=\sum_{i\in N}{\alpha_{\pos_{\succ_i}(a)}}=\score_g(a,P),
\end{align*}
where $\score_g(a,P)$ denotes the score of alternative $a$ in profile $P$ according to positional scoring rule $g$. Thus,
\begin{align}\nonumber
\E_{\substack{v\sim F\\P\sim \PP(v)}}[\sw(\mechanism(P),v)] &= \sum_{P\in \PP}{\E_{v\sim F}[\sw(\mechanism(P),v)|P\in \PP(v)]\cdot \Pr_{v\sim F}[P\in \PP(v)]}\\\nonumber
&= \sum_{P\in \PP}{\score_g(\mechanism(P),P) \cdot \Pr_{v\sim F}[P\in \PP(v)]}\\\label{eq:sw-vs-score-g}
&= \E_{\substack{v\sim F\\P\sim \PP(v)}}[\score_g(\mechanism(P),P)].
\end{align}
Notice that, by the definitions of voting rules \pl\ and $g$, for any profile $P$ and alternative $a\in A$, it holds
\begin{align}\label{eq:score-g-vs-plurality-score}
\score_g(a,P) &\leq \alpha_1 \cdot \score_{\pl}(a,P)+n\cdot \alpha_2 \leq \alpha_1 \cdot \score_{\pl}(\pl(P),P)+n\cdot (mp-\alpha_1).
\end{align}
The first inequality follows since $g$ gives $\alpha_1$ points to alternative $a$ for each of its $\score_\pl(a,P)$ appearances in the top position in the agents' rankings and at most $\alpha_2$ points for its remaining appearances. The second inequality follows since the plurality winner $\pl(P)$ has the highest plurality score (and, hence, $\score_\pl(a,P)\leq \score_\pl(\pl(P),P)$ for every alternative $a\in A$) and since $\alpha_2\leq \sum_{j=1}^m{\alpha_j}-\alpha_1$ and $\sum_{j=1}^m{\alpha_j}=mp$ (as the expected total value given to all alternatives by an agent).

We will also need the following claim.
\begin{claim}\label{claim:plu-score}
$\E_{\substack{v\sim F\\P\sim \PP(v)}}[\score_{\pl}(\pl(P),P)] \leq \frac{3n}{m}$.
\end{claim}

\begin{proof}
The proof will follow by a simple application of the Chernoff bound, e.g., see~\citet{motwani95}.
\begin{lemma}[Chernoff bound, upper tail]\label{lem:chernoff}
For every binomial random variable $Q$ and any $\delta>0$, we have
\begin{align*}
\Pr[Q\geq (1+\delta)\E[Q]] \leq \exp\left(-\frac{\delta^2\E[Q]}{2+\delta}\right).
\end{align*}
\end{lemma}

Consider an alternative $a\in A$ and denote by $X_i$ the random variable indicating whether agent $i$ ranks $a$ first ($X_i=1$; this happens with probability $1/m$) or not ($X_i=0$). Observe that $\score_\pl(a,P)=\sum_{i\in N}{X_i}$, i.e., $\score_\pl(a,P)$ is a binomial random variable with expectation $\frac{n}{m}$. By applying Lemma~\ref{lem:chernoff} with $\delta=1$, we obtain that $\Pr[\score_\pl(a,P)\geq \frac{2n}{m}]\leq \exp\left(-\frac{n}{3m}\right)\leq \frac{1}{m^2}$. The last inequality follows since $n\geq 6m\ln{m}$. Taking the union bound over the $m$ alternatives then gives $\Pr[\score_\pl(\pl(P),P)\geq \frac{2n}{m}]\leq\frac{1}{m}$. Finally, since the plurality score never exceeds $n$, we have $\E_{\substack{v\sim F\\P\sim \PP(v)}}[\score_\pl(\pl(P),P)]\leq \frac{2n}{m}\cdot \left(1-\frac{1}{m}\right)+n\cdot \frac{1}{m}\leq \frac{3n}{m}$.
\end{proof}

Notice that the top alternative of an agent has value $0$ with probability $(1-p)^m$ and value $1$ otherwise. Thus, $\alpha_1=1-(1-p)^m$. Now, observe that $1-(1-p)^m=1-\left(1-\frac{1}{nm}\right)^m\geq 1-\frac{1}{e^{1/n}}\geq \frac{1}{n+1}$, using the properties $(1-r/t)^t\leq e^{-r}$ for $t>0$ and $r\geq 0$ and $e^r\geq 1+r$. Thus, $mp-\alpha_1\leq \frac{1}{n}-\frac{1}{n+1}<\frac{1}{n^2}$. Also, using the property $(1-r)^t\geq 1-rt$ for $t\geq 1$, we get $\alpha_1=1-(1-p)^m\leq pm=\frac{1}{n}$. Combining these two last inequalities with equations (\ref{eq:sw-vs-score-g}) and (\ref{eq:score-g-vs-plurality-score}) and Claim~\ref{claim:plu-score}, we obtain
\begin{align*}
\E_{\substack{v\sim F\\P\sim \PP(v)}}[\sw(\mechanism(P),v)]
&= \E_{\substack{v\sim F\\P\sim \PP(v)}}[\score_g(\mechanism(P),P)]\\
&\leq \alpha_1\cdot \E_{\substack{v\sim F\\P\sim \PP(v)}}[\score_\pl(\pl(P),P)] + n(mp-\alpha_1)
\leq \frac{3}{m}+\frac{1}{n}\leq \frac{4}{m},
\end{align*}
as desired by (\ref{eq:4-over-m}). The last inequality follows from the relation between $n$ and $m$.
\end{proof}

\section{Constant average distortion with a single query per agent}\label{sec:1-query}

In the previous section, we saw that the average distortion can be high even for binary distributions if the mechanism has no access to the agents' underlying valuations. Still, binary valuations may reveal a lot of information with a single query. Consider a query for the value that an agent has for the alternative at position $j$ of her ranking. If the query returns a value of $1$, this means that the agent has a value of $1$ for all alternatives in positions higher than $j$ as well. This observation motivates the following definition of {\em implied social welfare}.

\begin{definition}[implied social welfare]\label{defn:implied_sw}
For a distribution $F \in \binarydist$, assume that a mechanism queried each agent for the value of the alternative in position $k$ of her ranking. For each agent $i\in N$, let $\val_{i,k}$ denote the respective value. The {\em implied social welfare} then is
$$\impliedsw(a,P,v,k) = \sum_{i \in N} \one\{\val_{i,k} = 1 \,\wedge\, \pos_{\succ_i}(a) \leq k\}.$$
\end{definition}

\noindent We use the notion of implied social welfare in the definition of the following mechanism.

\begin{definition}[mechanism \med]\label{defn:median_mechanism}
Given a profile $P$ with underlying valuations drawn from a distribution $F_p\in\binarydist$, the mechanism \med\ queries each agent for the value of the alternative at position $\tau = \max\{1, \lfloor pm\rfloor\}$. The mechanism then returns the alternative that maximizes the implied social welfare, that is,
$$\med(P,v) \in \argmax_{a \in A} \impliedsw(a,P,v,\tau),$$
breaking ties arbitrarily.
\end{definition}

We show that mechanism \med\ has constant average distortion with a single query for the family of binary distributions.\footnote{One may wonder whether the simplification of mechanism \med, which always queries the value of the top-ranked alternative in each agent and returns the alternative that maximizes the implied social welfare, achieves a constant average distortion as well. We discuss this question in Appendix~\ref{app:sec:comments} and show that, unless we introduce a more sophisticated tie-breaking, this variation has average distortion at least $\Omega\left(\frac{\log{m}}{\log\log{m}}\right)$.} Notably, it is impossible for deterministic 1-query mechanisms to achieve similar guarantees in the traditional setting of worst-case distortion. Indeed, even for binary valuations, the worst case distortion of these mechanisms is $\Omega(\sqrt{m})$; see Theorem~\ref{thm:0/1_valued_1_query_lower_bound}.

\begin{theorem}\label{thm:mean_mechanism_distortion}
Mechanism $\med$ has average distortion at most $27$ in impartial culture electorates with $n$ agents and $m$ alternatives, and underlying values drawn from any probability distribution in $\binarydist$.
\end{theorem}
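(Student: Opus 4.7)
The plan is to prove $\E[\sw(\med(P,v), v)] \geq \E[M]$ for $M(P,v) := \max_{a \in A}\impliedsw(a,P,v,\tau)$, and then to establish $\E[\max_a \sw(a,v)] \leq 27\,\E[M]$. Fix any $p \in [0,1]$ and let $\tau = \max(1,\lfloor pm \rfloor)$ and $N_i := |\{a \in A : \val_i(a) = 1\}| \sim \textup{Bin}(m,p)$. The reduction is an immediate pointwise inequality
\begin{align*}
\sw(\med(P,v),v) \geq \impliedsw(\med(P,v),P,v,\tau) \geq M(P,v),
\end{align*}
because every agent contributing $1$ to $\impliedsw(a,P,v,\tau)$ has $\val_{i,\tau} = 1$ and $a$ in her top $\tau$ positions, which forces $\val_i(a) = 1$, and because the mechanism picks the argmax.

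A short symmetry argument shows that conditional on $N_i = k$, the position $\pos_{\succ_i}(a)$ of any fixed $a$ is uniform over $[m]$, and hence
\begin{align*}
\Pr\bigl[\val_{i,\tau} = 1 \wedge \pos_{\succ_i}(a) \leq \tau\bigr] = \Pr[N_1 \geq \tau]\cdot \tau/m.
\end{align*}
Thus $\impliedsw(a,P,v,\tau) \sim \textup{Bin}(n, q)$ marginally, with $q := \Pr[N_1 \geq \tau]\cdot \tau/m$, and these random variables are negatively associated across $a$ since each agent with $N_i \geq \tau$ makes exactly $\tau$ of the indicator contributions equal to $1$ (and none otherwise).

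I then split into two regimes. In Case A, $pm \geq 1$, the binomial-median property gives $\Pr[N_1 \geq \tau] \geq 1/2$ and $\tau \geq pm/2$, so $q \geq p/4$. I upper bound $\E[\max_a \sw(a,v)]$ by $np + O(\sqrt{np\log m} + \log m)$ via a Bernstein tail bound and a union bound over the $m$ alternatives, and match this from below on $\E[M]$ using the Paley--Zygmund inequality applied to the events $\{\impliedsw(a,P,v,\tau) \geq t\}$: negative association bounds the second moment by the product of the marginals, so a suitable choice of $t$ yields $\E[M] \geq \Omega\bigl(nq + \sqrt{nq\log m} + \log m\bigr)$, which matches the upper bound up to a constant since $q \geq p/4$. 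In Case B, $pm < 1$, so $\tau = 1$. The key structural observation is that the top-ranked alternative $T_i$ of agent $i$ is uniform over $A$ and \emph{independent} of $\one[N_i \geq 1]$, since the symmetry computation of the previous paragraph gives $\Pr[T_i = a \mid N_i = k] = 1/m$ for every $k$. Hence, conditional on $B := |\{i : N_i \geq 1\}|$, the tuple $\{\impliedsw(a,P,v,1)\}_{a \in A}$ is distributed as the bin loads in the classical balls-into-bins process with $B$ balls placed uniformly and independently into $m$ bins. Standard balls-into-bins estimates lower bound $\E[M]$ by the max-load of this process; since $\E[B]/m \geq (1-1/e)\, np$ and $\E[\max_a \sw(a,v)] = \E[\max_a \textup{Bin}(n,p)]$ is governed by matching Poisson-tail behavior, the two expected maxes agree up to a constant factor.

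The hard part is the second-moment lower bound on $\E[M]$ in Case A: negative association only upper bounds cross terms, and I must confirm that the tail of $\max_a \impliedsw(a,P,v,\tau)$ reaches the same order of magnitude as the tail of $\max_a \sw(a,v)$, covering both the Chernoff-dominated regime ($np \gtrsim \log m$) and the extreme-value regime ($np \ll \log m$). Tracking the multiplicative constants across all parameter regimes---particularly near the crossovers $pm \approx 1$ and $np \approx \log m$---and combining them into the stated bound $27$ is the other main bookkeeping challenge.
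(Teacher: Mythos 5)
Your reduction $\sw(\med(P,v),v)\geq\max_a\impliedsw(a,P,v,\tau)$ and your basic case split at $pm\approx 1$ line up with the paper's, but the analytic machinery is genuinely different, and as sketched it has a real gap.

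The paper never estimates $\E[\max_a\sw(a,v)]$ as an explicit function of $n,m,p$. Instead it uses a rearrangement argument (Lemma~\ref{lem:balanced}, proved via Lemma~\ref{lem:t-t-primes} in the appendix) showing that conditioning the total number $X(v)$ of ones to be split evenly among agents only increases the expected max; it then compares both $\E[\max_a\sw(a,v)]$ and $\E[\sw(\med(P,v),v)]$ to a single benchmark $B=\E[\max_a\sw(a,v)\mid\boxv(v)]$. This sidesteps the need to understand the distribution of the max across parameter regimes. You instead propose to compute both expected maxima via concentration bounds and a second-moment (Paley--Zygmund) lower bound and then argue they match up to a constant because $q\geq p/4$.

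The gap is in your Case A. You claim an upper bound $\E[\max_a\sw(a,v)]\leq np+O(\sqrt{np\log m}+\log m)$ and a ``matching'' lower bound $\E[M]\geq\Omega(nq+\sqrt{nq\log m}+\log m)$. The latter is false in the extreme-value regime $nq\ll\log m$: the max of $m$ (even independent) $\textup{Bin}(n,q)$ variables is $\Theta\bigl(\tfrac{\log m}{\log(\log m/nq)}\bigr)$ there, not $\Omega(\log m)$. Your Bernstein/union-bound upper bound is simultaneously loose in that regime (the true max of the $\textup{Bin}(n,p)$'s is also $\Theta\bigl(\tfrac{\log m}{\log(\log m/np)}\bigr)$, not $\Theta(\log m)$), so the two bounds as written do \emph{not} differ by a constant; closing this requires the sharper Poisson-tail matching you defer, which is a substantial piece of work, not bookkeeping. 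Your Case B has a similar issue: the balls-into-bins reduction (conditional on $B$, the $\impliedsw(\cdot,P,v,1)$ are bin loads of $B$ uniform balls) is correct, but the statement that the expected max load tracks $\E[\max_a\textup{Bin}(n,p)]$ up to a constant must be verified over all regimes including $\E[B]<1$; the paper handles this regime separately and cleanly (its Case~3, where it shows the ratio is at most $pnm/(1-(1-p)^{nm})<2$) precisely because the balls-into-bins heuristic degenerates there. Finally, the specific constant $27$ falls out of the paper's benchmark comparison ($\E[\max_a\sw]\leq 13.5B$ and $\E[\sw(\med)]\geq B/2$ in Case~1); your approach would at best give some unspecified universal constant, so you have not proved the stated bound.
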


\begin{proof}
We partition family $\binarydist$ into the subfamilies of probability distributions $\F_1$, $\F_2$, and $\F_3$ defined as follows:
\begin{align*}
\F_1 &= \{F_p\in \binarydist : p\geq 1/m\}\\
\F_2 &= \{F_p\in \binarydist : 1-(1-1/n)^{1/m}\leq p < 1/m \}\\
\F_3 &= \{F_p\in \binarydist : p < 1-(1-1/n)^{1/m}\}
\end{align*}
Let $F_p\in \binarydist$; we will prove the theorem by distinguishing between the three cases $F_p\in \F_1$, $F_p\in \F_2$, and $F_p\in \F_3$. 

We introduce some notation that we use throughout this proof. Let $\tau$ denote the position that mechanism $\med$ queried in each agent's ranking, i.e., $\tau=\max\{1,\lfloor mp\rfloor\}$. For valuations $v$,  let $X_i(v)$ be the number of alternatives for which agent $i$ draws a value of $1$ and let $X(v)=\sum_{i\in N}{X_i(v)}$. We will consider valuations $v$ drawn from the probability distribution $F_p$. Thus, $X_i(v)$ is a random variable following the binomial distribution with $m$ trials and success probability $p$. Since $X(v)$ is the sum over i.i.d.~binomially distributed random variables, $X(v)$ itself follows a binomial distribution with $nm$ trials and success probability $p$.

Furthermore, we denote by $Z_i(v,\tau)$ the random variable indicating whether the query at position $\tau$ of agent $i$'s ranking returned a value of $1$ (then, $Z_i(v,\tau)=1$) or not (then, $Z_i(v,\tau)=0$). Clearly, $Z_i(v,\tau)=\one\{X_i(v)\geq \tau\}$ such that the random variable $Z(v,\tau)=\sum_{i\in N}{Z_i(v,\tau)}$ follows a binomial distribution with $n$ trials and success probability $\Pr[X_i(v)\geq\tau]$. We note that the variables $X_i$ are identically distributed for every $i\in N$.

For the first two cases below, we will need a technical lemma which we prove in Appendix~\ref{app:sec:lem:balanced}.

\begin{lemma}\label{lem:balanced}
For integer $s\in [nm]$, define the condition $\balanced(v,s)$ to be $X(v)=s$ with $\lfloor s/n\rfloor \leq X_i(v) \leq \lceil s/n\rceil$ for $i\in N$. Then, for any distribution $F\in\binarydist$, it holds that
\begin{align*}
\E_{v\sim F}\left[\max_{a\in A}{\sw(a,v)}|X(v)=s\right] &\leq \E_{v\sim F}\left[\max_{a\in A}{\sw(a,v)}|\balanced(v,s)\right].
\end{align*}
\end{lemma}

\paragraph{Case 1.} Consider impartial culture electorates with $n$ agents and $m$ alternatives with underlying values drawn from the distribution $F_p\in \F_1$.  Denote by $\boxv(v)$ the condition $X_i(v)=\tau$ for $i\in S$ and $X_i(v)=0$ for $i\in N\setminus S$, where $S$ is a subset of the agents of size exactly $\lfloor n/2 \rfloor$. Now, define the quantity
\begin{align}\label{eq:benchmark-case-1}
B&=\E_{v\sim F}\left[\max_{a\in A}{\sw(a,v)}|\boxv(v)\right],
\end{align}
i.e., the expected maximum social welfare among all alternatives given that $\lfloor n/2\rfloor$ agents draw a value of $1$ for exactly $\tau$ alternatives and the remaining agents draw values of $0$ for all alternatives. The quantity $B$ will be a benchmark that will help us compare the expected social welfare of $\med(P,v)$ to the maximum social welfare.

As stated above, each $X_i(v)$ is a binomial random variable with $m$ trials and success probability $p$. Hence, its median is either at the value $\lfloor pm \rfloor$ or $\lceil pm \rceil$, which are both at least $\tau$ since $p\geq 1/m$. Thus, $\Pr_{v\sim F}[X_i(v) \geq \tau]\geq 1/2$. The random variable $Z(v,\tau)$ then follows a binomial distribution with $n$ trials and success probability at least $1/2$. By the same property of the median of the binomial distribution, it now holds that $\Pr_{v\in F}[Z(v,\tau)\geq \lfloor n/2\rfloor]\geq 1/2$. With this observation and by applying the law of total expectation, we have
\begin{align}\nonumber
&\E_{\substack{v\sim F_p\\P\sim \PP(v)}}[\sw(\med(P,v),v)]\\\nonumber
&= \E_{\substack{v\sim F_p\\P\sim \PP(v)}}[\sw(\med(P,v),v)|Z(v,\tau)\geq \lfloor n/2\rfloor]\cdot \Pr_{v\in F_p}[Z(v,\tau)\geq \lfloor n/2\rfloor]\\\nonumber
&\quad +\E_{\substack{v\sim F_p\\P\sim \PP(v)}}[\sw(\med(P,v),v)|Z(v,\tau)< \lfloor n/2\rfloor]\cdot \Pr_{v\in F_p}[Z(v,\tau)< \lfloor n/2\rfloor]\\\nonumber
&\geq \frac{1}{2}\cdot \E_{\substack{v\sim F_p\\P\sim \PP(v)}}\left[\sw(\med(P,v),v)|Z(v,\tau)\geq \lfloor n/2\rfloor\right]\\\nonumber
&\geq \frac{1}{2}\cdot \E_{\substack{v\sim F_p\\P\sim \PP(v)}}\left[\max_{a\in A}\impliedsw(a,P,v,\tau)|Z(v,\tau)\geq \lfloor n/2\rfloor\right]\\\label{eq:bound-for-mechanism-case-1}
&\geq \frac{1}{2}\cdot \E_{v\sim F_p}\left[\max_{a\in A}{\sw(a,v)}|\boxv(v)\right] = \frac{1}{2}\cdot B.
\end{align}
The first inequality above is obvious. By definition of $\med$, $\med(P,v)$ is the alternative that maximizes the implied social welfare when querying each agent in position $\tau$. Hence, the implied social welfare is a lower bound for the social welfare of $\med(P,v)$, which yields the second inequality. Finally, under the condition that $Z(v,\tau)\geq \lfloor n/2\rfloor$, there are at least $\lfloor n/2\rfloor$ agents $i$ for each of which $X_i(v) \geq \tau$. This includes the condition $\boxv(v)$ and the third inequality follows. The last equality follows from the definition of $B$ in (\ref{eq:benchmark-case-1}).

We proceed to bound the expected maximum social welfare from above in terms of $B$. For this purpose, we require another technical lemma which we prove in Appendix~\ref{app:sec:lem:j-bound-case-1}.

\begin{lemma}\label{lem:j-bound-case-1}
For every distribution $F\in\binarydist$ and any positive integer $j$, it holds that
\begin{align*}
\E_{v\sim F}\left[\max_{a\in A}{\sw(a,v)}|\balanced(v,jn\tau)\right] &\leq 3j\cdot B.
\end{align*}
\end{lemma}

For a positive integer $j$, define the condition $\rangev(v,j)$ to be $(j-1)n\tau < X(v)\leq jn\tau$. By applying the law of total expectation, we obtain
\begin{align}\nonumber
\E_{v\sim F_p}[\max_{a\in A}{\sw(a,v)}] &= \sum_{j=1}^{\infty}{\E_{v\sim F_p}[\max_{a\in A}{\sw(a,v)}|\rangev(v,j)]\cdot \Pr_{v\sim F_p}[\rangev(v,j)]}\\\nonumber
&\leq \sum_{j=1}^{\infty}{\E_{v\sim F_p}[\max_{a\in A}{\sw(a,v)}|X(v)=jn\tau]\cdot \Pr_{v\sim F_p}[\rangev(v,j)]}\\\nonumber
&\leq \sum_{j=1}^{\infty}{\E_{v\sim F_p}[\max_{a\in A}{\sw(a,v)}|\balanced(v,jn\tau)]\cdot \Pr_{v\sim F_p}[\rangev(v,j)]}\\ \label{eq:bound-for-mechanism-case-1-intermediate}
&\leq 3B\cdot \sum_{j=1}^{\infty}{j\cdot \Pr_{v\sim F_p}[\rangev(v,j)]}.
\end{align}
The first inequality follows since the condition $\rangev(v,j)$ includes the condition $X(v)=jn\tau$ and since the quantity $\E\left[\max_{a\in A}{\sw(a,v)}|X(v)=t\right]$ is non-decreasing in terms of $t$. The second inequality follows from Lemma~\ref{lem:balanced} while the third one follows from Lemma~\ref{lem:j-bound-case-1}. We now bound the term $\sum_{j=1}^{\infty}{j\cdot \Pr_{v\sim F_p}[\rangev(v,j)]}$.
\begin{align*}
\sum_{j=1}^{\infty}{j\cdot \Pr_{v\sim F_p}[\rangev(v,j)]}
&= \sum_{j=1}^{\infty} j \sum_{k=(j-1)n\tau}^{jn\tau}\Pr_{v\sim F_p}[X(v)=k]\\
&= \frac{1}{n\tau}\cdot \sum_{j=1}^{\infty} \sum_{k=(j-1)n\tau}^{jn\tau}jn\tau\cdot \Pr_{v\sim F_p}[X(v)=k]\\
&\leq \Pr_{v\sim F_p}[X(v)\leq n\tau] + \frac{2}{n\tau}\cdot \sum_{j=2}^{\infty} \sum_{k=(j-1)n\tau}^{jn\tau} k \cdot \Pr_{v\sim F_p}[X(v)=k]\\
&\leq \frac12 + \frac{2}{n\tau} \cdot \E_{v\sim F_p}[X(v)] \leq 4.5.
\end{align*}
The first inequality follows from the fact that $jn\tau/2 \leq (j-1)n\tau\leq k$ for $j\geq 2$. The second inequality follows from the definition of the expectation of random variable $X(v)$. Since $X(v)$ follows the binomial distribution with $nm$ trials and success probability $p$, we have $\E[X(v)]=nmp$ which is at most $2n\lfloor mp\rfloor=2n\tau$ since $p\geq 1/m$. 

Now, inequality~(\ref{eq:bound-for-mechanism-case-1-intermediate}) implies that $E_{v\sim F}[\max_{a\in A}{\sw(a,v)}] \leq 13.5B$. Combining this bound with inequality~(\ref{eq:bound-for-mechanism-case-1}) lets us conclude that $\adist(\med,\F_1)\leq 27$, as desired.

\paragraph{Case 2.} Consider impartial culture electorates with $n$ agents and $m$ alternatives with underlying values drawn from the distribution $F_p\in \F_2$. Since $p<1/m$, mechanism $\med$ always queries the value of the top-ranked alternative in each agent, i.e., $\tau = 1$. Then, $\med$ picks the alternative that maximizes the implied social welfare $\impliedsw(a,P,v,1)$. It is thus immediately clear that
\begin{align}\label{eqn:sw_mech_case_2}
\E_{\substack{v \sim F_p \\ P \sim \PP(v)}}[\sw(\med(P,v),v)] \geq \E_{\substack{v\sim F_p\\P\sim \PP(v)}}\left[\max_{a\in A} \impliedsw(a,P,v,1)\right].
\end{align}

We intend to also relate the expected maximum social welfare to the RHS of (\ref{eqn:sw_mech_case_2}). First, we observe that
\begin{align}\label{eqn:max_sw_case_2}
\E_{v\sim F_p}\left[\max_{a\in A} \sw(a,v)\right]
= \sum_{t=1}^n \E_{\substack{v\sim F_p\\P\sim \PP(v)}}\left[\max_{a\in A} \sw(a,v) | Z(v,1) = t\right] \cdot \Pr_{v\sim F_p}[Z(v,1) = t].
\end{align}
In the following, we will upper-bound the term $\E[\max_{a\in A} \sw(a,v) | Z(v,1) = t]$ and show that this quantity is within a constant factor of $\E[\max_{a\in A}\impliedsw(a,P,v,1) | Z(v,1) = t]$ for every positive $t$. We will need another technical lemma (see Appendix~\ref{app:sec:lem:t-bound-case-2} for the proof). Indeed, the proof of the Lemma~\ref{lem:t-bound-case-2} makes use of Lemma~\ref{lem:balanced} and can be seen as a refined formulation of Lemma~\ref{lem:j-bound-case-1} for the present case where $\tau = 1$.

\begin{lemma}\label{lem:t-bound-case-2}
For every distribution $F\in\binarydist$ and any positive integer $j$, it holds that
\begin{align*}
\E_{v\sim F_p}\left[\max_{a\in A} \sw(a,v)|Z(v,1)=t,X(v)=j\right]
&\leq \lceil j/t \rceil \cdot\E_{\substack{v\sim F_p\\P\sim \PP(v)}}\left[\max_{a\in A} \impliedsw(a,P,v,1)|Z(v,1) = t\right].
\end{align*}
\end{lemma}

With this lemma in hand, we have that
\begin{align}\nonumber
&\E_{v\sim F_p}\left[\max_{a\in A} \sw(a,v) | Z(v,1) = t\right]\\ \nonumber
&=\sum_{j=t}^{\infty} \E_{v\sim F_p}\left[\max_{a\in A} \sw(a,v)|Z(v,1)=t,X(v)=j\right]\cdot\Pr_{v\sim F_p}[X(v)=j|Z(v,1)=t]\\ \label{eqn:max_sw_case_2_cond_Z_t}
& \leq \E_{\substack{v\sim F_p\\P\sim \PP(v)}}\left[\max_{a\in A} \impliedsw(a,P,v,1)|Z(v,1) = t\right] \sum_{j=t}^{\infty}\left\lceil\frac{j}{t}\right\rceil \Pr_{v\sim F_p}[X(v)=j|Z(v,1)=t].
\end{align}
We proceed to bound the sum that appears in the previous inequality.
\begin{align}\nonumber
&\sum_{j=t}^{\infty}\left\lceil\frac{j}{t}\right\rceil \Pr_{v\sim F_p}[X(v)=j|Z(v,1)=t]\\ \nonumber
&\leq \sum_{j=t}^{\infty}\left(\frac{j}{t}+1\right) \Pr_{v\sim F_p}[X(v)=j|Z(v,1)=t]\\ \nonumber
&= \sum_{j=t}^{\infty} \Pr_{v\sim F_p}[X(v)=j|Z(v,1)=t] + \frac{1}{t} \sum_{j=t}^{\infty} j\cdot\Pr_{v\sim F_p}[X(v)=j|Z(v,1)=t]\\\nonumber
&=  1 + \frac{1}{t} \E_{v\sim F_p}[X(v)=j|Z(v,1)=t]\\ \label{eqn:sum_over_probs_case_2}
&= 1 + \E_{v\sim F_p}[X_i(v)|X_i(v)\geq 1],
\end{align}
for any agent $i\in N$. The last equality is true since, under the condition $Z(v,1)=t$, $X(v)$ is the sum $\sum_{i\in S}{X_i(v)}$ for a set $S$ of $t$ agents who are selected uniformly at random and each satisfy $X_i(v)\geq 1$. As the random variables $X_i(v)$ are identically distributed for all agents in $S$, we obtain that $\E[X(v)|Z(v,1)=t]=t\cdot \E[X_i(v)|X_i(v)\geq 1]$ for every agent $i$, which yields the equality.

Now, observe that \begin{align}\label{eqn:exp_num_ones_if_at_least_one_one}
\E_{v\sim F_p}[X_i(v) | X_i(v) \geq 1]
= \frac{\E_{v\sim F_p}[X_i(v)]}{\Pr_{v\sim F_p}[X_i(v)\geq 1]}
= \frac{pm}{1-(1-p)^m}.
\end{align}
The derivative of the RHS of (\ref{eqn:exp_num_ones_if_at_least_one_one}) with respect to $p\in (0,1)$ is
\begin{align*}
  \frac{m}{(1-(1-p)^m)^2}(1-(1-p)^{m-1}(1+p(m-1)) &\geq  \frac{m}{(1-(1-p)^m)^2}\left(1-((1-p)(1+p))^{m-1}\right)\\
&= \frac{m}{(1-(1-p)^m)^2}\left(1-\left(1-p^2\right)^{m-1}\right)>0.
\end{align*}
The first inequality follows from the inequality $(1+t)^r\geq 1+rt$ for $r\geq 1$. Hence, $\E[X_i(v) | X_i(v) \geq 1]$ is strictly increasing in $p$. Since $p<1/m$ in the current case, it follows from (\ref{eqn:exp_num_ones_if_at_least_one_one}) that
\begin{align*}
\E_{v\sim F_p}[X_i(v) | X_i(v) \geq 1]
= \frac{pm}{1-(1-p)^m}
\leq \frac{1}{1-(1-1/m)^m}
\leq \frac{1}{1-e^{-1}},
\end{align*}
using the inequality $(1-1/m)^m \leq e^{-1}$ for any $m\geq 1$.

Using this last observation together with inequalities (\ref{eqn:max_sw_case_2_cond_Z_t}) and (\ref{eqn:sum_over_probs_case_2}), we obtain that 
\begin{align*}
\E_{v\sim F_p}\left[\max_{a\in A} \sw(a,v)\right]
\leq \frac{2e-1}{e-1} \cdot \E_{\substack{v\sim F_p\\P\sim \PP(v)}}\left[\max_{a\in A} \impliedsw(a,P,v,1)\right].
\end{align*}
In combination with the lower bound on the social welfare of the mechanism $\med$ in inequality~(\ref{eqn:sw_mech_case_2}), this yields an average distortion of at most $2.6$.

\paragraph{Case 3.} Again, mechanism $\med$ queries the top-ranked alternative in each agent. Notice that if we have at most two agents, the average distortion of mechanism $\med$ is at most $2$. Indeed, mechanism $\med$ always returns an alternative of positive social welfare whenever there exists one, and no alternative ever has a social welfare higher than $2$. So, in the following, we assume that $n\geq 3$. We will show that the average distortion is lower than $2$ in this case.

Consider impartial culture electorates with $n$ agents and $m$ alternatives with underlying values drawn from the distribution $F_p\in \F_3$. Notice that the maximum social welfare is never larger than the total social welfare of all alternatives. Hence,
\begin{align*}
\E_{v\sim F_p}\left[\max_{a\in A}{\sw(a,v)}\right]\leq pnm.
\end{align*}
With probability $1-(1-p)^{nm}$, there is at least one agent that gives a non-zero value to some alternative. Then, $\med$ returns an alternative of social welfare at least $1$. Hence,
\begin{align*}
\E_{\substack{v\sim F_p\\P\sim \PP(v)}}\left[\sw(\med(P,v),v)\right]\geq 1-(1-p)^{nm}.
\end{align*}
The average distortion of $\med$ can therefore be upper-bounded by the term $\frac{pnm}{1-(1-p)^{nm}}$. Notice that the derivative with respect to $p$ of this quantity is
\begin{align*}
& \frac{nm}{(1-(1-p)^{nm})^2}\cdot (1-(1-p)^{nm-1}(1-p+pnm))\\ 
&\geq\, \frac{nm}{(1-(1-p)^{nm})^2}\cdot (1-(1-p)^{nm-1}(1+p)^{nm-1})\\
&=\, \frac{nm}{(1-(1-p)^{nm})^2}\cdot (1-(1-p^2)^{nm-1})>0.
\end{align*}
The first inequality follows from the property $(1+t)^r\geq 1+rt$ for $r\geq 1$ and the second (strict) inequality is due to the fact that $p > 0$. Hence, the average distortion of the mechanism is strictly increasing in $p$. Using $p^*=1-(1-1/n)^{1/m}$ and since $p<p^*$, we have
\begin{align}\label{eq:ratio-at-p-star}
\adist(\med,\F_3)=\max_{F_p\in \F_3}\frac{\E_{v\sim F_p}\left[\max_{a\in A}{\sw(a,v)}\right]}{\E_{\substack{v\sim F\\P\sim \PP(v)}}\left[\sw(M(P,v),v)\right]}\leq \frac{pnm}{1-(1-p)^{nm}}<\frac{p^*nm}{1-(1-p^*)^{nm}}.
\end{align}
Substituting $p^*$, we get that the denominator in the RHS of (\ref{eq:ratio-at-p-star}) is equal to $1-(1-1/n)^n\geq 1-e^{-1}$. To bound the numerator, observe that $\left(1-\frac{3\ln{\frac{3}{2}}}{nm}\right)^{nm}<\frac{8}{27}$ and that $(1-1/n)^n\geq \frac{8}{27}$ for $n\geq 3$. These inequalities follow since the expression $(1-r/t)^t$ is strictly increasing for $t\geq r$ and approaches $e^{-r}$ from below as $t$ goes to infinity. Thus, we have that $\left(1-\frac{3\ln{\frac{3}{2}}}{nm}\right)^m<1-1/n$, which is equivalent to $\frac{3\ln{\frac{3}{2}}}{nm}>1-(1-1/n)^{1/m}=p^*$, implying that the numerator of the RHS of (\ref{eq:ratio-at-p-star}) is at most $3\ln{\frac{3}{2}}$. We conclude that $\adist(\med,\F_3)<\frac{3\ln{\frac{3}{2}}}{1-e^{-1}}<2$, completing the proof.
\end{proof}

\section{Randomized mechanisms}\label{sec:general-upper-bounds}
We now present two randomized mechanisms for impartial culture electorates with underlying valuations drawn from a {\em general} probability distribution and worst-case electorates, respectively. Both mechanisms are nevertheless similar in spirit. They randomly pick a single threshold from a suitably defined set of thresholds and query each agent to determine a set of alternatives that have value above the threshold. This information is then used to compute an approximation of the alternatives' respective social welfare, which is used to decide the winning alternative.

\subsection{Impartial culture electorates}

Our first ``random threshold'' mechanism \pick\ uses mechanism \med\ as a building block.

\begin{definition}[mechanism \pick]
The mechanism \pick\ uses $k$ thresholds $\ell_1,\ell_2,..., \ell_k$ with $0<\ell_1<...<\ell_k$ as parameters. Given a profile $P$ with underlying valuations drawn from a probability distribution $F$, \pick\ selects an integer $t$ uniformly at random from $[k]$, and sets $p=\Pr_{z\sim F}[z\geq \ell_t]$. It then simulates an execution of \med\ on the distribution $F_p\in\binarydist$ by
\begin{itemize}
\item making the same value queries as \med\ for $F_p$, but
\item interpreting the answer $\val_i(a)$ to a query as $1$ if $\val_i(a)\geq \ell_t$ and $0$ otherwise.
\end{itemize}
\pick\ returns as output the alternative that \med\ selects.
\end{definition}

Notice that mechanism \pick\ uses exactly one query per agent.

\begin{theorem}\label{thm:pick}
Let $F$ be a probability distribution over non-negative, real-valued outcomes with mean $\mu$ and variance $\sigma^2$. There is a set of thresholds $\ell_1, \ell_2, ..., \ell_k$ such that the average distortion of mechanism \pick\ is at most $O(\log m + \log \frac{\sigma^2}{\mu^2})$ when applied to impartial culture electorates with $n$ agents and $m$ alternatives, and underlying values drawn according to $F$.
\end{theorem}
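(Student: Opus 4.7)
The plan is to reduce the real-valued problem to $k = \Theta(\log m + \log(\sigma^2/\mu^2))$ binary subproblems via a geometric ladder of thresholds and apply Theorem~\ref{thm:mean_mechanism_distortion} at each rung. Concretely, set $\ell_1 = \mu/4$, $\ell_{t+1}=2\ell_t$, and take $k$ to be the smallest integer with $\ell_k \geq 4m(\sigma^2+\mu^2)/\mu$, so $k = O(\log m + \log(\sigma^2/\mu^2))$. Write $p_t = \Pr_{z\sim F}[z\geq \ell_t]$, $b_i^t(a) = \one\{v_i(a)\geq \ell_t\}$, and $M_t = \max_{a\in A} \sum_{i\in N} b_i^t(a)$. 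Conditional on \pick\ selecting index $t$, the vector $(b_i^t(a))_{i,a}$ is i.i.d.\ from $F_{p_t}$, and since the original $v_i(a)$'s are exchangeable within each of the two level-sets of $b_i^t$, the uniform random ranking consistent with $v_i$ is also uniform among rankings consistent with $b_i^t$. So \pick, conditional on $t$, executes in law a run of \med\ on $F_{p_t}$. Theorem~\ref{thm:mean_mechanism_distortion} therefore gives $\E\bigl[\sum_i b_i^t(\pick(P,v))\mid t\bigr] \geq \E[M_t]/27$, and since every surviving summand contributes at least $\ell_t$ to the real-valued social welfare, averaging over uniform $t\in[k]$ yields
\[
\E\bigl[\sw(\pick(P,v),v)\bigr] \;\geq\; \frac{1}{27\,k}\sum_{t=1}^k \ell_t\,\E[M_t].
\]

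For the numerator, the layer-cake identity $\sw(a,v)=\int_0^\infty |\{i:v_i(a)\geq x\}|\,dx$ together with the monotonicity of $x\mapsto \E[N(v,x)]$, where $N(v,x):=\max_a|\{i:v_i(a)\geq x\}|$, yields under geometric spacing
\[
\E\Bigl[\max_a \sw(a,v)\Bigr] \;\leq\; n\ell_1 \;+\; \sum_{t=1}^{k-1}\ell_t\,\E[M_t] \;+\; \frac{nm(\sigma^2+\mu^2)}{\ell_k},
\]
where the tail term comes from the crude bound $\E[N(v,x)]\leq nm\Pr[z\geq x]$ combined with $\int_{\ell_k}^\infty \Pr[z\geq x]\,dx \leq \E[z^2]/\ell_k$ (Markov on $z\cdot\one\{z\geq \ell_k\}\leq z^2/\ell_k$). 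To absorb the two boundary terms, use $\E[M_t]\geq np_t$ and compare $\sum_t \ell_t p_t$ with $\int_{\ell_1}^{\ell_k}\Pr[z\geq x]\,dx$ as a right-endpoint Riemann sum, giving
\[
\sum_{t=1}^{k-1} \ell_t\,\E[M_t] \;\geq\; n\bigl(\mu-\ell_1-(\sigma^2+\mu^2)/\ell_k\bigr) \;\geq\; n\mu/2
\]
from the chosen $\ell_1$ and $\ell_k$. Since each of $n\ell_1$ and $nm(\sigma^2+\mu^2)/\ell_k$ is at most $n\mu/4$, hence at most half of $n\mu/2$, it follows that $\E[\max_a \sw(a,v)] \leq 2\sum_t \ell_t\,\E[M_t]$, which combined with the lower bound on $\E[\sw(\pick(P,v),v)]$ gives $\adist(\pick)\leq 54\,k = O(\log m+\log(\sigma^2/\mu^2))$.

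The main difficulty is calibrating the threshold range $[\ell_1,\ell_k]$. Anchoring $\ell_1$ to the mean $\mu$ (rather than to a percentile of $F$) is what makes $\sum_t \ell_t p_t$ comparable to $\mu$, which is the one quantity that simultaneously dominates the left boundary $n\ell_1$ and permits a purely second-moment tail bound at $\ell_k$; any less aggressive $\ell_k$ would fail for heavy-tailed distributions, and this is the source of the $\log(\sigma^2/\mu^2)$ overhead in $k$. A second technical point that needs justification is the distributional identity between the \pick\ execution on $F$ and a \med\ execution on $F_{p_t}$: one must check that conditioning on the binarization $b_i^t$ leaves the ranking $\succ_i$ uniform over the rankings consistent with $b_i^t$, which follows from the conditional exchangeability of $(v_i(a))_a$ within each level-set of $b_i^t$.
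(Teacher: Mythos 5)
Your proof is correct and reaches the stated $O(\log m + \log\frac{\sigma^2}{\mu^2})$ bound, and your high-level strategy (a geometric ladder of thresholds, reducing each level to a binary instance handled by $\med$ via Theorem~\ref{thm:mean_mechanism_distortion}, then absorbing boundary terms) is the same as the paper's. The main divergence is in how the numerator $\E[\max_a \sw(a,v)]$ is bounded and how the boundary contributions are absorbed. The paper works with the pointwise inequality
$z \leq z\one\{z<\ell_1\}+(z-\ell_k)\one\{z\geq\ell_k\}+2\sum_t\ell_t\one\{z\geq\ell_t\}$,
sums it over all agent--alternative pairs, and controls the boundary part by comparing against the \emph{sum} of all social welfares (hence the factor $m$), which forces $L=\mu/(4m)$ and a self-referential step that produces $\E[\max_a\sw]\leq 4\,\E[\max_a\sum_t\ell_t\sw(a,v^t)]$. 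You instead apply the layer-cake identity directly to $\max_a\sw(a,v)$ via $N(v,x)=\max_a|\{i: v_i(a)\geq x\}|$, which lets you use the trivial bound $N(v,x)\leq n$ on the bottom interval; this permits the much larger $\ell_1=\mu/4$, removes the self-referential step, and shrinks $U/L$ by a factor of about $m$ (and improves the constant from $108k$ to $54k$, with a smaller $k$ as well). Both routes prove the same asymptotic bound, and the one genuinely nontrivial claim you make explicitly --- that conditional on the binarization $b^t_i$, the random ranking $\succ_i$ is uniform over rankings consistent with $b^t_i$, so that $\pick$ conditioned on $t$ really does run $\med$ in law on $F_{p_t}$ --- is also needed in the paper's argument but is only used there implicitly, so spelling it out is a welcome clarification.

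Two small remarks on phrasing rather than correctness: the comparison $\sum_{t<k}\ell_t p_t \geq \int_{\ell_1}^{\ell_k}\Pr[z\geq x]\,dx$ uses the \emph{left} endpoints $\ell_t$ (an over-estimate because $x\mapsto\Pr[z\geq x]$ is non-increasing), not a right-endpoint sum as written; and your denominator sum runs over $t\in[k]$ while the numerator sum over $t\in[k-1]$, which is fine but worth noting since the ratio bound relies on the numerator sum being a sub-sum.
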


\begin{proof}
To prove the theorem, we use the following lemma which relates the average distortion of \pick\ to the structure of the distribution $F$.

\begin{lemma}\label{lem:pick_L_U}
For a random variable $z$ following $F$, assume that there are $L,U>0$ such that
\begin{align*}
    \E_{z\sim F}[z\one\{z<L\}+(z-U)\one\{z\geq U\}] \leq \frac{\mu}{2m}.
\end{align*}
Then, there exists a choice of thresholds $\ell_1, \ell_2, ..., \ell_k$ such that mechanism \pick\ yields average distortion at most $108\left\lceil\log\frac{U}{L}\right\rceil$.
\end{lemma}

\begin{proof}
Set $k=\left\lceil\log\frac{U}{L}\right\rceil$ and define the thresholds of mechanism \pick\ as $\ell_t=L\cdot 2^{t-1}$ for $t=1, 2, ..., k$ and $\ell_k=U$. We begin by observing that, for any $z\geq 0$, we have 
\begin{align}\nonumber
    z & \leq z\one\{z<\ell_1\}+\ell_1\one\{z\geq \ell_1\}+\sum_{t=1}^{k-1}{(\ell_{t+1}-\ell_t)\one\{z\geq \ell_t\}}+(z-\ell_k)\one\{z\geq \ell_k\}\\\label{eq:inequality-on-z}
    & \leq z\one\{z<\ell_1\}+(z-\ell_k)\one\{z\geq \ell_k\}+2\sum_{t=1}^k{\ell_t\one\{z\geq \ell_t\}}.
\end{align}
The second inequality follows since the definition of the thresholds implies that $\ell_{t+1}\leq 2\ell_t$ for $t=1, ..., k-1$ and, hence $\ell_{t+1}-\ell_t\leq \ell_t$. We now have
\begin{align*}
\E_{v\sim F}\left[\max_{a\in A}\sw(a,v)\right]
&= \E_{v\sim F}\left[\max_{a\in A}{\sum_{i=1}^n{\val_i(a)}}\right]\\ 
&\leq \E_{v\sim F}\left[\max_{a\in A}\sum_{i=1}^n{\left(\vphantom{\sum_{t=1}^k}\val_i(a)\one\{\val_i(a)<\ell_1\}+(\val_i(a)-\ell_k)\one\{\val_i(a)\geq \ell_k\}\right.}\right.\\
&\quad\quad \left.\left.+2 \cdot \sum_{t=1}^k{\ell_t\one\{\val_i(a)\geq \ell_t\}}\right)\right]\\
&\leq \E_{v\sim F}\left[\max_{a\in A}\sum_{i=1}^n{\left(\val_i(a)\one\{\val_i(a)<\ell_1\}+(\val_i(a)-\ell_k)\one\{\val_i(a)\geq \ell_k\}\right)}\right]\\
&\quad\quad + 2 \cdot \E_{v\sim F}\left[\max_{a\in A}\sum_{i=1}^n{\sum_{t=1}^k{\ell_t\one\{\val_i(a)\geq \ell_t\}}}\right]\\
&\leq \E_{v\sim F}\left[\sum_{a\in A}\sum_{i=1}^n{\left(\val_i(a)\one\{\val_i(a)<\ell_1\}+(\val_i(a)-\ell_k)\one\{\val_i(a)\geq \ell_k\}\right)}\right]\\
&\quad\quad + 2\cdot \E_{v\sim F}\left[\max_{a\in A}\sum_{i=1}^n{\sum_{t=1}^k{\ell_t\one\{\val_i(a)\geq \ell_t\}}}\right]\\
&\leq \frac{1}{2m}\cdot \E_{v\sim F}\left[\sum_{a\in A}\sum_{i=1}^n{\val_i(a)}\right]+ 2\cdot \E_{v\sim F}\left[\max_{a\in A}\sum_{i=1}^n{\sum_{t=1}^k{\ell_t\one\{\val_i(a)\geq \ell_t\}}}\right]\\
&\leq \frac{1}{2}\cdot \E_{v\sim F}\left[\max_{a\in A}{\sw(a,v)}\right]+ 2\cdot \E_{v\sim F}\left[\max_{a\in A}\sum_{i=1}^n{\sum_{t=1}^k{\ell_t\one\{\val_i(a)\geq \ell_t\}}}\right].
\end{align*}
The first inequality follows from (\ref{eq:inequality-on-z}), the second and third inequalities use the fact that the maximum among non-negative values is upper-bounded by their sum, the fourth inequality uses the assumption in the statement of the lemma for the random variable $\val_i(a)$ with $\mu = \E_{\val_i(a)\sim F}[\val_i(a)]$, and the last inequality follows since the average among non-negative values is a lower-bound on the maximum value among them. The above inequality is equivalent to 
\begin{align}\label{eq:intermediate-guarantee}
 \E_{v\sim F}\left[\max_{a\in A}\sum_{i=1}^n{\sum_{t=1}^k{\ell_t\one\{\val_i(a)\geq \ell_t\}}}\right]
 &\geq \frac{1}{4}\cdot \E_{v\sim F}\left[\max_{a\in A}{\sw(a,v)}\right].
\end{align}

For valuations $v$ and $t\in [k]$, define the valuations $v^t$ that consists of the binary values $\one\{\val_i(a)\geq \ell_t\}$. Notice that for every alternative $a\in A$, it is $\sw(a,v) \geq \ell_t\cdot \sw(a,v^t)$. Mechanism \pick\ selects the integer $t$ uniformly at random from $[k]$ and, for every profile $P$ that is consistent with the valuations $v$, it returns the alternative $\med(P,v^t)$. We thus have
\begin{align*}
    &\E_{\substack{v\sim F\\P\sim\PP(v)}}\left[\sw(\pick(P,v),v)\right]\\
    &= \frac{1}{k}\cdot \sum_{t=1}^k{\E_{\substack{v\sim F\\P\sim\PP(v)}}\left[\sw(\med(P,v^t),v)\right]}\geq \frac{1}{k}\cdot \sum_{t=1}^k{\ell_t\cdot \E_{\substack{v\sim F\\P\sim\PP(v)}}\left[ \sw(\med(P,v^t),v^t)\right]}\\
    &\geq  \frac{1}{27k}\sum_{t=1}^k{\ell_t\cdot \E_{v\sim F}\left[\max_{a\in A}{\sw(a,v^t)}\right]} \geq \frac{1}{27k}\E_{v\sim F}\left[\max_{a\in A}{\sum_{t=1}^k{\ell_t\cdot \sw(a,v^t)}}\right]\\
    &= \frac{1}{27k}\cdot \E_{v\sim F}\left[\max_{a\in A}{\sum_{i=1}^n{\sum_{t=1}^k{\ell_t \one\{\val_i(a)\geq \ell_t\}}}}\right] \geq \frac{1}{108k}\cdot \E_{v\sim F}\left[\max_{a\in A}{\sw(a,v)}\right],
\end{align*}
as desired. The second inequality follows from the average distortion guarantee for mechanism \med\ from Theorem~\ref{thm:mean_mechanism_distortion}, and the fifth one by inequality (\ref{eq:intermediate-guarantee}).
\end{proof}

Now, let $L = \frac{\mu}{4m}$ and $U = \mu + \frac{4m\sigma^2}{\mu}$. Notice that
\begin{equation}\label{eqn:bound_exp_below_L}
\E_{z\sim F}[z\one\{z < L\}]\leq L\cdot \Pr_{z\sim F}[z < L] \leq \frac{\mu}{4m}.
\end{equation}
We slightly overload our notation and denote by $F$ also the cumulative distribution function of a random variable $z$ following the distribution $F$. For $t\geq U$, we have that
$$1-F(t)
= \Pr_{z\sim F}[z\geq t]
\leq \Pr_{z\sim F}\left[|z - \mu| \geq t - \mu \right]
\leq \frac{\sigma^2}{(t-\mu)^2}$$
where the last transition follows from Chebyshev's inequality. Then,
\begin{equation}\label{eqn:bound_exp_above_U}
\E_{z\sim F}\left[(z-U)\one\{z\geq U\}\right]
= \int_U^\infty (1-F(t))\,dt
\leq \int_U^\infty \frac{\sigma^2}{(t-\mu)^2}\,dt
= \frac{\sigma^2}{U-\mu}
= \frac{\mu}{4m}.
\end{equation}
By (\ref{eqn:bound_exp_below_L}) and (\ref{eqn:bound_exp_above_U}), the condition of Lemma~\ref{lem:pick_L_U} is satisfied with $U/L = 4m + 16m^2 \frac{\sigma^2}{\mu^2}$. The theorem then follows from the bound on the average distortion of \pick\ provided by Lemma~\ref{lem:pick_L_U}.
\end{proof}

As an immediate consequence of Theorem~\ref{thm:pick}, we obtain $O(\log{m})$ bounds on the average distortion of \pick\ for fundamental families of distributions. The relevant properties of these distributions which we use for the following corollary can be found in, e.g., \citet[table~5-2, p.~162]{PU02}.

\begin{corollary}
The average distortion of mechanism \pick\ is at most $O(\log{m})$ when applied to impartial culture electorates with $n$ agents and $m$ alternatives, and underlying values drawn according to
\begin{itemize}
\item the exponential distribution $E(\lambda)$ where $\mu = \frac{1}{\lambda}, \sigma^2 = \frac{1}{\lambda^2}$,
\item the chi-squared distribution $\chi^2(k)$ where $\mu = k, \sigma^2 = 2k$, or
\item the Erlang-$k$ distribution $E_k(\lambda)$ where $\mu = \frac{1}{\lambda},\sigma^2=\frac{1}{k\lambda^2}$.
\end{itemize}
In the latter two cases, $k$ denotes a positive integer.
\end{corollary}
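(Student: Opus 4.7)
The plan is to apply Theorem~\ref{thm:pick} directly: since that theorem gives an average distortion bound of $O(\log m + \log\frac{\sigma^2}{\mu^2})$, the corollary reduces to showing that for each of the three named distributions, the variance-to-mean-squared ratio $\sigma^2/\mu^2$ is bounded above by a universal constant independent of $m$, so that the $\log\frac{\sigma^2}{\mu^2}$ term contributes only $O(1)$ and the total bound is $O(\log m)$.

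First I would read off $\sigma^2/\mu^2$ in each case using the moment formulas supplied in the corollary statement (and referenced from \citet{PU02}). For the exponential distribution $E(\lambda)$, the ratio is $(1/\lambda^2)/(1/\lambda^2) = 1$, so $\log\frac{\sigma^2}{\mu^2} = 0$. For the chi-squared distribution $\chi^2(k)$ with $k \geq 1$, the ratio is $2k/k^2 = 2/k \leq 2$, so $\log\frac{\sigma^2}{\mu^2} \leq 1$. For the Erlang-$k$ distribution $E_k(\lambda)$ with $k \geq 1$, the ratio is $(1/(k\lambda^2))/(1/\lambda^2) = 1/k \leq 1$, so $\log\frac{\sigma^2}{\mu^2} \leq 0$. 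In each case, the ratio is bounded by a constant that does not depend on $m$, $n$, or the scale parameter $\lambda$.

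Plugging these observations into Theorem~\ref{thm:pick} with an appropriate choice of thresholds $\ell_1, \dots, \ell_k$ (as constructed in the proof of Lemma~\ref{lem:pick_L_U} with $L = \mu/(4m)$ and $U = \mu + 4m\sigma^2/\mu$) immediately yields an average distortion bound of $O(\log m)$ for mechanism \pick\ in each of the three cases. I do not anticipate any real obstacle here: the corollary is a direct numerical specialization of the theorem, and the three example distributions simply illustrate a broad class of distributions whose variance is at most proportional to the square of the mean.
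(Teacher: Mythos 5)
Your proposal is correct and matches the paper's reasoning exactly: the paper states this corollary as an immediate consequence of Theorem~\ref{thm:pick}, and the only content is verifying that $\sigma^2/\mu^2$ is $O(1)$ for each distribution, which your computations (ratio $1$, $2/k \leq 2$, and $1/k \leq 1$ respectively) do correctly.
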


We remark that distributions for which our analysis does not give logarithmic average distortion are the $\gamma$-distribution $G(\alpha,\beta)$ where $\mu=\alpha\beta$, $\sigma^2=\alpha\beta^2$ and the $\beta$-distribution $\beta(\alpha,\beta)$ where $\mu = \frac{\alpha}{\alpha+\beta}$, $\sigma^2=\frac{\alpha\beta}{(\alpha+\beta)^2(\alpha+\beta+1)}$ for respective $\alpha$-parameters very close to zero.

\subsection{Worst-case electorates}

Our next random threshold mechanism \rand\ achieves low worst-case distortion.\footnote{We remark that \rand\ is similar in spirit to a mechanism proposed by \citet{BNPS21} in a slightly different participatory budgeting context. There are important differences in modelling assumptions, though. First, their model allows for more powerful queries than ours, where an agent is asked to report all alternatives which she ranks above a certain threshold. More importantly, they assume unit-sum valuations; this assumption affects the design of their mechanism and simplifies its analysis.}

\begin{definition}[mechanism \rand]
For a given profile $P$ with underlying valuations $v$ over $m$ alternatives, mechanism \rand\ picks an integer $r$ uniformly at random from the set $\{1,2,\ldots,\lceil\log 2m\rceil\}$. For every agent $i\in N$, mechanism \rand\
\begin{itemize}
\item queries the value $\nu_i$ of the agent's top-ranked alternative, and
\item finds the set of alternatives $S_{i,r}$, for each of which the agent has value more than $\nu_i/2^r$ using binary search.
\end{itemize}
The mechanism then returns an alternative $\rand(P,v)\in\argmax_{a\in A} \sum_{i\in N} \nu_i \one\{a\in S_{i,r}\}$.
\end{definition}

\begin{theorem}
Mechanism $\rand$ achieves worst-case distortion at most $O(\log m)$ with $O(\log m)$ queries per agent.
\end{theorem}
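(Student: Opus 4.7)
The query count is immediate: the top-value query contributes one query, and binary search for the threshold crossing in a ranking of length $m$ contributes $O(\log m)$ queries, for $O(\log m)$ queries per agent. The meat is the distortion bound. My plan is to fix an optimal alternative $a^*$ and to lower-bound $\E_r[\sw(\rand(P,v),v)]$ by combining two observations: (i) once $r$ is fixed, the mechanism picks the argmax of $\sum_i \nu_i\,\one\{a\in S_{i,r}\}$, so in particular it is at least as good as pointing to $a^*$ or to any other ``benchmark'' alternative; (ii) for an agent $i$ with $a\in S_{i,r}$ we have the trivial value lower bound $\val_i(a)\geq \nu_i/2^r$. Put together, for every $a$,
\begin{align*}
\E_r[\sw(\rand(P,v),v)] \;\geq\; \frac{1}{R}\sum_{r=1}^{R}\frac{1}{2^r}\sum_{i\in N}\nu_i\,\one\{a\in S_{i,r}\},
\qquad R=\lceil \log 2m\rceil.
\end{align*}

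The second step is to evaluate this bound at $a=a^*$. For every agent $i$ with $\val_i(a^*)>\nu_i/2^R$, let $r_i$ be the smallest $r$ with $\val_i(a^*)>\nu_i/2^r$; then $r_i\leq R$, we have $a^*\in S_{i,r_i}$, and the definition of $r_i$ gives $\nu_i/2^{r_i}\geq \val_i(a^*)/2$. Keeping only the $r=r_i$ term in the geometric sum above then yields a contribution of at least $\val_i(a^*)/2$ per such agent. This takes care of the ``large contributors'' to $\sw(a^*,v)$, showing that their total value is at most $2R\cdot\E_r[\sw(\rand(P,v),v)]$.

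The third step is the residual, i.e., agents $i$ with $\val_i(a^*)\leq \nu_i/2^R\leq \nu_i/(2m)$; their total contribution to $\sw(a^*,v)$ is at most $\frac{1}{2m}\sum_i \nu_i$. To control $\sum_i \nu_i$ by the mechanism's expected social welfare, I specialize to $r=1$: since every agent's top alternative always lies in $S_{i,1}$ (it has value $\nu_i>\nu_i/2$), we get $\sum_{a\in A}\sum_i \nu_i\,\one\{a\in S_{i,1}\}\geq \sum_i \nu_i$, so by averaging there exists an alternative $\widetilde a$ with $\sum_i \nu_i\,\one\{\widetilde a\in S_{i,1}\}\geq \frac{1}{m}\sum_i\nu_i$. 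Applying (i)--(ii) above to $a=\widetilde a$ and the single term $r=1$ then gives $\E_r[\sw(\rand(P,v),v)]\geq \frac{1}{2mR}\sum_i\nu_i$, hence the residual is at most $R\cdot \E_r[\sw(\rand(P,v),v)]$.

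Adding the two contributions gives $\sw(a^*,v)\leq 3R\cdot\E_r[\sw(\rand(P,v),v)]=O(\log m)\cdot \E_r[\sw(\rand(P,v),v)]$, which is exactly the claimed distortion bound. The only place where I expect mild care to be required is in step two: pinning down $r_i$ so that $a^*\in S_{i,r_i}$ (using the strict inequality in the definition of $S_{i,r}$) and showing that $r_i\leq R$ precisely when $\val_i(a^*)>\nu_i/2^R$, which is why the threshold range was chosen to be of length $\lceil\log 2m\rceil$ rather than $\log m$. Everything else is a routine combination of the argmax optimality of the mechanism with the averaging trick at $r=1$.
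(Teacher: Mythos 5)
Your argument is correct, and it proves the theorem with a marginally better constant than the paper ($3\lceil\log 2m\rceil$ versus $4\lceil\log 2m\rceil$). The underlying inequalities are the same ones the paper uses, but the packaging is genuinely different. The paper introduces the auxiliary quantity $\artsw(a,v)=\sum_i\nu_i\sum_{r\geq 1}\one\{a\in S_{i,r}\}/2^r$, proves the sandwich $\sw(a,v)\leq\artsw(a,v)\leq 2\sw(a,v)$ as a standalone lemma, and then shows $\sum_{r\leq R}\artsw(a_r,v)\geq\tfrac12\artsw(a^*,v)$ by combining the per-$r$ argmax property with a truncation of the geometric tail (bounded by $\tfrac{1}{2m}\sum_i\nu_i\leq\tfrac12\sw(a^*,v)$). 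You bypass the artificial social welfare altogether: you start from the direct lower bound $\E_r[\sw(\rand(P,v),v)]\geq\frac1R\sum_{r\leq R}\frac{1}{2^r}\sum_i\nu_i\one\{a\in S_{i,r}\}$ for any benchmark $a$, split $\sw(a^*,v)$ into large contributors (agents with $\val_i(a^*)>\nu_i/2^R$, handled by isolating the single term $r=r_i$ in the geometric sum) and a residual of at most $\tfrac{1}{2m}\sum_i\nu_i$, and control the residual by averaging at $r=1$ to produce an alternative $\widetilde a$ with score at least $\tfrac1m\sum_i\nu_i$. Both proofs rest on the same facts --- $\val_i(a)>\nu_i/2^r$ when $a\in S_{i,r}$, argmax optimality of each $a_r$, and the averaging bound $\tfrac1m\sum_i\nu_i\leq\sw(a^*,v)$ --- but your version is unrolled and self-contained (no auxiliary lemma), whereas the paper's organization isolates a conceptually reusable sandwich. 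The only mild care required, which you correctly flag and handle, is that $r_i\leq R$ precisely for agents with $\val_i(a^*)>\nu_i/2^R$ and that the strict inequality in the definition of $S_{i,r}$ makes $a^*\in S_{i,r_i}$.
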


\begin{proof}
Clearly, for any fixed $r$ and any agent $i$, the alternatives $S_{i,r}$ can be identified by finding the lowest-ranked alternative in $\succ_i$ with value more than $\nu_i/2^r$. This can be accomplished using binary search using $O(\log m)$ queries per agent. For the upper bound on the distortion of \rand, we introduce the following notion of {\em artificial} social welfare. Given valuations $v$ and an alternative $a\in A$, we define the artificial social welfare of $a$ as
$$\artsw(a,v) = \sum_{i\in N} \nu_i \sum_{r=1}^\infty \frac{\one\{a\in S_{i,r}\}}{2^r}.$$
With the next lemma, we show that for any alternative, its artificial social welfare is within a factor of 2 of its social welfare.

\begin{lemma}\label{lemma:artificial_sw}
For any set of valuations $v$ and any alternative $a\in A$, it holds that $\sw(a,v) \leq \artsw(a,v) \leq 2\sw(a,v)$.
\end{lemma}

\begin{proof}
For any agent $i\in N$, let $r_i^*$ be a non-negative integer such that $\val_i(a) \in (\nu_i/2^{r_i^*},\nu_i/2^{r_i^*-1}]$. Notice that thereby
\begin{equation}\label{eqn:artificial_sw_lower}
\sum_{i\in N} \frac{\nu_i}{2^{r_i^*-1}} \geq \sum_{i\in N} \val_i(a) = \sw(a,v),
\end{equation}
and
\begin{equation}\label{eqn:artificial_sw_upper}
\sum_{i\in N}\frac{\nu_i}{2^{r_i^*-1}} < 2\sum_{i\in N} \val_i(a) = 2\sw(a,v).
\end{equation}
Then, for every agent $i$ and any alternative $a$, we have that
$$\sum_{r=1}^\infty \frac{\one\{a\in S_{i,r}\}}{2^r}
= \sum_{r=r_i^*}^\infty \frac{1}{2^r}
= \frac{1}{2^{r_i^*}} \sum_{r=0}^\infty \frac{1}{2^r}
= \frac{1}{2^{r_i^*-1}},$$
and, thus,
\begin{equation*}
\artsw(a,v)
= \sum_{i\in N} \nu_i \sum_{r=1}^\infty \frac{\one\{a\in S_{i,r}\}}{2^r}
= \sum_{i\in N}\frac{\nu_i}{2^{r_i^*-1}}.
\end{equation*}
The lemma follows from inequalities~(\ref{eqn:artificial_sw_lower}) and~(\ref{eqn:artificial_sw_upper}).
\end{proof}
For a given profile $P$, let $a_r$ be the alternative returned by mechanism \rand\ for a draw of $r = 1,2,\ldots,\lceil\log 2m\rceil$, respectively. Denote by $a^*$ the alternative of maximum social welfare for the valuations $v$ underlying $P$. Then, by definition of \rand, it holds that
$$\sum_{i\in N} \frac{\nu_i\one\{a_r\in S_{i,r}\}}{2^r}\geq \sum_{i\in N} \frac{\nu_i\one\{a^*\in S_{i,r}\}}{2^r},$$
for every $r \in \{1,2,\ldots,\lceil\log 2m\rceil\}$, and, hence,
\begin{equation}\label{eqn:sum_over_choice_rand}
\sum_{r=1}^{\lceil\log 2m\rceil}\sum_{i\in N} \frac{\nu_i\one\{a_r\in S_{i,r}\}}{2^r}
\geq \sum_{r=1}^{\lceil\log 2m\rceil}\sum_{i\in N} \frac{\nu_i\one\{a^*\in S_{i,r}\}}{2^r}.
\end{equation}
We now have that
\begin{align}\nonumber 
\sum_{r=1}^{\lceil\log 2m\rceil}\artsw(a_r,v)
&= \sum_{r=1}^{\lceil\log 2m\rceil}\sum_{i\in N}\nu_i\sum_{r'=1}^\infty \frac{\one\{a_r\in S_{i,r'}\}}{2^{r'}} \geq \sum_{i\in N}\nu_i\sum_{r=1}^{\lceil\log 2m\rceil} \frac{\one\{a_r\in S_{i,r}\}}{2^r}\\\nonumber 
& \geq \sum_{i\in N}\nu_i\sum_{r=1}^{\lceil\log 2m\rceil} \frac{\one\{a^*\in S_{i,r}\}}{2^{r}}\nonumber = \artsw(a^*,v) - \sum_{i\in N}\nu_i\sum_{r=\lceil\log 2m\rceil +1}^\infty \frac{\one\{a^*\in S_{i,r}\}}{2^{r}}\\\nonumber
& \geq \artsw(a^*,v) - \sum_{i\in N}\nu_i \left(\frac{1}{2^{\lceil\log{2m}\rceil+1}}\sum_{r=0}^{\infty} \frac{1}{2^r}\right)
\geq \artsw(a^*,v) - \frac{1}{2m}\sum_{i\in N}\nu_i\\\label{eqn:art_sw_opt}
& \geq \frac12 \artsw(a^*,v).
\end{align}
Here, we used~(\ref{eqn:sum_over_choice_rand}) to arrive at the second inequality. The last inequality follows since the average value of the top-ranked alternatives, that is, $(1/m) \sum_{i\in N}\nu_i$, cannot be higher than $\sw(a^*,v)$ and, by Lemma~\ref{lemma:artificial_sw}, $\sw(a^*,v)\leq \artsw(a^*,v)$. Hence, for any set of valuations $v$ and any profile $P\in\PP(v)$, we have that
\begin{align*}
\E_{r\sim\left[\lceil\log 2m\rceil\right]}[\sw(\rand(P,v),v)]
&= \frac{1}{\lceil\log 2m\rceil}\sum_{r=1}^{\lceil\log 2m\rceil} \sw(a_r,v) \geq \frac{1}{2\lceil\log 2m\rceil}\sum_{r=1}^{\lceil\log 2m\rceil} \artsw(a_r,v)\\
&\geq \frac{1}{4\lceil\log 2m\rceil} \artsw(a^*,v) \geq \frac{1}{4\lceil\log 2m\rceil} \sw(a^*,v),
\end{align*}
where the first inequality follows from Lemma~\ref{lemma:artificial_sw}, the second inequality follows from (\ref{eqn:art_sw_opt}), and the third inequality again follows from Lemma~\ref{lemma:artificial_sw}. This concludes the proof of the theorem.
\end{proof}

\section{Worst-case distortion lower bounds}
\label{sec:lower-bounds}

We conclude our technical exposition by presenting two lower bounds on the worst-case distortion. Our basic approach in both of them is as follows. First, for every (large enough) value of $m$ and a value of $n$ of our choice, we decide the agents' rankings. For every position in an agent's ranking, we pre-define a value that is revealed if this particular position is queried by a mechanism. Let $\widehat{a}$ be the alternative that a mechanism picked as winning on the given profile. We then show that ---for every choice of $\widehat{a}$--- it is possible to {\em fix} (i.e., to choose) the agents' remaining concealed valuations in such a way that the distortion is high. That is, for any position not queried by the mechanism, we assume an adversarial set of valuations that is consistent with the agents' rankings and with the values revealed to the mechanism.

\subsection{Lower bounding the number of queries for constant distortion}

Our first lower bound on the number of queries per agent that are necessary to get constant worst-case distortion improves the previously best bound of~\citet{ABFV21} by a sublogarithmic factor. More specifically, \citet{ABFV21} present a lower bound construction and show that any mechanism that uses up to $\lambda$ queries per agent must have worst-case distortion at least $\Omega\left(\frac{1}{\lambda}\cdot m^{\frac{1}{2(\lambda+1)}}\right)$. Thus, in order to get constant worst-case distortion, at least $\Omega\left(\frac{\log{m}}{\log\log{m}}\right)$ queries per agent are necessary. In their follow-up work, \citet{ABFV22a} present an improved construction which yields a higher distortion of $\Omega(m^{1/\lambda})$. Unfortunately, $\lambda$ is now required to be a constant. Therefore, their new construction does not provide any lower bound on the number of queries per agent necessary to get constant worst-case distortion. We prove the next theorem using a considerably different construction which shows that mechanisms making at most $\lambda$ queries per agent have worst-case distortion at least $\frac{1}{8}\cdot m^{\frac{1}{3\lambda}}$ for values of $\lambda$ that are allowed to be logarithmic in $m$.

\begin{theorem}\label{thm:omega_log_m_lower_bound_non_stochastic}
Any deterministic mechanism that achieves a constant worst-case distortion must make $\Omega(\log m)$ queries per agent.
\end{theorem}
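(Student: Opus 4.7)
The plan is to prove the stronger quantitative statement noted just before the theorem: for every $\lambda$ which may be as large as $\Theta(\log m)$, any deterministic mechanism making at most $\lambda$ queries per agent has worst-case distortion at least $\tfrac{1}{8}\,m^{1/(3\lambda)}$. Setting $\lambda = c\log m$ for a sufficiently small constant $c$ then rules out constant-distortion mechanisms with fewer than $\Omega(\log m)$ queries per agent, which is exactly the theorem. The proof will follow the adversarial framework announced at the start of Section~\ref{sec:lower-bounds}: fix the rankings and the value returned at each position in advance, and only after the mechanism commits to a winner $\widehat{a}$ fix the remaining un-queried values subject to consistency with the rankings and the revealed answers.

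The concrete construction I would use sets $k=\lfloor m^{1/(3\lambda)}\rfloor$ and partitions every agent's ranking into $\lambda+1$ consecutive blocks of geometrically growing lengths (block $t$ of size proportional to $k^{t}$, adjusted so the total is $m$). Inside block $t$, the revealed value at every position is the same fixed number $\alpha_t \approx k^{-t}$, so that any query in a block discloses no information about which particular alternative of that block will later be boosted by the adversary. The number of agents $n$ is then chosen as a polynomial in $m$, large enough that concentration arguments over the agent-block incidence hold.

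After the mechanism outputs $\widehat{a}$, I would argue as follows. Each agent is queried at most $\lambda$ times, so by pigeonhole over the $\lambda+1$ blocks every agent has at least one block untouched by queries; averaging over the $m-1$ candidates $a^\star\neq \widehat{a}$, one can select $a^\star$ which lies in an un-queried block for $\Omega(n)$ agents. For those agents the adversary raises $\val_i(a^\star)$ to the top value in that block (order $1$) and then fills the remaining un-queried entries monotonically so that every $\val_i$ stays consistent with $\succ_i$ and with the already revealed answers. This gives $\sw(a^\star,v) = \Omega(n)$, while $\sw(\widehat{a},v)$ remains upper bounded by the telescoping contribution of the revealed values across blocks, which totals $O(n/k)$. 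The ratio is $\Omega(k)=\Omega(m^{1/(3\lambda)})$, as desired, with the factor $1/8$ coming from the constants in the block sizes and the averaging.

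The main obstacle will be the joint calibration of block sizes, revealed values, and candidate placements so that three constraints hold simultaneously: queries within a block are uninformative about which of its alternatives the adversary will boost; the revealed values suffice to upper bound $\sw(\widehat{a},v)$ by $O(n/k)$; and the adversary has enough remaining monotonicity slack, on each agent's un-queried block, to raise a chosen $a^\star$ to value $\Theta(1)$ without violating consistency with revealed answers in earlier blocks (where values are larger) or later blocks (where values are smaller). This interaction is delicate because a revealed value $\alpha_t$ caps the adversary's freedom at every position below it in the ranking; the geometric schedule of block sizes and revealed values is what preserves a constant multiplicative slack at each of the $\lambda+1$ levels. The constant $3$ appearing in the exponent $1/(3\lambda)$ is precisely the loss incurred in this balancing, and sharpening it would require a more sophisticated adversary than the one above.
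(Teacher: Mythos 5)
Your high-level framework matches the paper's: a fixed symmetric profile, a geometric block structure, pre-committed revealed values per position, and an adversary that fills in the concealed values only after the mechanism selects its winner $\widehat{a}$. But the specific calibration you sketch has two gaps that I do not think can be patched along the lines you describe, and both are exactly the points the paper's construction is designed to avoid.

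First, $\lambda+1$ blocks is too few. With one untouched block per agent, the mechanism can query a position in each of blocks $2,\dots,\lambda+1$ and leave only the smallest block (size $\Theta(m/k^\lambda)$ with your geometric sizes) unqueried. Averaging over the $m$ alternatives, the best $a^\star$ then lies in an untouched block for at most $\Theta(n/k^\lambda)$ agents, which is $o(n)$ whenever $k=o(m^{1/\lambda})$; your setting $k=\lfloor m^{1/(3\lambda)}\rfloor$ gives $k^\lambda\approx m^{1/3}$, so you recover only about $n/m^{1/3}$ agents, not $\Omega(n)$. Making $n$ polynomial in $m$ does not help, since this loss is a fraction of $n$, not an additive term. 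The paper sidesteps this by using $2\lambda+1$ buckets (plus a tail), so that after accounting for the $\lambda$ queried buckets and the one bucket that contains $\widehat{a}$, at least $\lambda-1$ buckets per agent remain free to hold the adversarially high value. Crucially, the paper then does not boost a single alternative at all: it lower-bounds $\max_a\sw(a,v)$ by $\frac{1}{m}\sum_a\sw(a,v)$, which automatically credits every one of the $\Omega(\lambda n)$ high buckets. With a single boosted $a^\star$ and one unqueried block per agent, you cannot accumulate that $\Theta(\lambda)$ factor.

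Second, even granting the $\Omega(n)$ agents, the claimed boost to $\Theta(1)$ is blocked by monotonicity. If agent $i$'s untouched block is block $t_i\geq 2$ and some earlier block $t'<t_i$ was queried with revealed value $\alpha_{t'}$, then $\val_i$ on block $t_i$ cannot exceed $\alpha_{t'-1}$ (or $\alpha_{t'}$ depending on indexing), which is $\Theta(k^{-(t_i-2)})$ under your geometric schedule, not $\Theta(1)$. Since the mechanism will greedily query block $1$ for every agent, every agent's $t_i\geq 2$, and the per-agent contribution to $\sw(a^\star,v)$ is $O(1/k)$ at best. Combined with $\sw(\widehat{a},v)=\Theta(\lambda)$ (since each block contributes $|B_t|\alpha_t\approx$ constant), the resulting ratio is $O(k/\lambda)=O\bigl(m^{1/(3\lambda)}/\lambda\bigr)$, which collapses to a vanishing quantity once $\lambda=\Theta(\log m)$ and so does not imply the theorem. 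The paper avoids this because it never raises any value above the per-bucket high value $m^{(2\lambda-j+1)/(3\lambda)}$, and because its high/low scheme makes the contribution to the sum of all social welfares from each high bucket uniformly $\Theta(m^{(2\lambda+1)/(3\lambda)})$ while $\sw(\widehat{a},v)$ is only $O(\lambda m^{2/3})$; the $\lambda$ factors cancel in the ratio, which is exactly what yields the clean $\frac{1}{8}m^{1/(3\lambda)}$ bound robustly for $\lambda$ up to $\log m$.

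In short, the two ideas you would need to import from the paper are (i) roughly twice as many buckets as queries, so that $\Omega(\lambda)$ buckets per agent remain unqueried, and (ii) bounding $\max_a\sw(a,v)$ via the average $\frac{1}{m}\sum_a\sw(a,v)$ rather than via a single boosted alternative. Without both, the argument loses a $1/\lambda$ factor that is fatal precisely in the regime $\lambda=\Theta(\log m)$ that the theorem is about.
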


\begin{proof}
Let $m\geq 154$ and $\lambda$ be an integer such that $2\leq \lambda\leq \log{m}$. Consider a mechanism $\mechanism$ that makes at most $\lambda$ queries per agent; we will show that $\mechanism$ has worst-case distortion at least $\frac{1}{8}m^{\frac{1}{3\lambda}}$.

We define the symmetric profile $P=\{\succ_i\}_{i\in N}$ with $n=m$ agents, so that agent $i$ has the ranking
$$i \succ_i i+1 \succ_i \dots \succ_i m \succ_i 1 \succ_i \ldots \succ_i i-1.$$
The ranking of every agent $i$ is divided into $2\lambda + 1$ sets ---or {\em buckets}--- $B^{(i)}_1, \ldots, B^{(i)}_{2\lambda + 1}$ where
$$\left|B^{(i)}_j\right| = b_j = \left\lceil m^{\frac{j}{3 \lambda}}\right\rceil,$$
for $j \in [2\lambda]$, and $|B^{(i)}_{2\lambda+1}| = b_{2\lambda + 1} = m - \sum_{j=1}^{2\lambda} b_j$. Hence, $$B_j^{(i)}=\left\{i+\sum_{t=1}^{j-1}{b_t}\bmod m, ..., i-1+\sum_{t=1}^{j}{b_t}\bmod m\right\}.$$ 
We refer to the alternatives from bucket $B^{(i)}_{2\lambda + 1}$ as the {\em tail alternatives} of agent $i$.\footnote{Our assumptions $m\geq 154$ and $\lambda\leq \log{m}$ guarantee that $\sum_{j=1}^{2\lambda}{b_j}\leq m$ and, thus, buckets $B_{2\lambda+1}^{(i)}$ are well-defined.}

We proceed to describe the agents' valuations $v$. Every agent $i$ assigns a value of $0$ to each of her tail alternatives, i.e., $\val_i(a) = 0$ for every $a \in  B^{(i)}_{2\lambda + 1}$. For $j\in [2\lambda]$, agent $i$ assigns to all the alternatives of bucket $B_j^{(i)}$ either a {\em low} value of $m^{\frac{2\lambda - j}{3 \lambda}}$ or a {\em high} value of $m^{\frac{2\lambda - j + 1}{3 \lambda}}$ in the following way. Whenever the mechanism $\mechanism$ makes a query for the value of an alternative in bucket $B_j^{(i)}$, the concealed value of each alternative in bucket $B_j^{(i)}$ is set to the low value, i.e.,  $\val_i(a)=m^{\frac{2\lambda-j}{3\lambda}}$ for every alternative $a\in B_j^{(i)}$; this value is also revealed as the outcome of the query. Now, consider a bucket $B_j^{(i)}$, in which mechanism $\mechanism$ did not query the value of any alternative. The concealed values of all alternatives in this bucket are set to the low value $m^{\frac{2\lambda - j}{3 \lambda}}$ if the winning alternative $\mechanism(P,v)$ belongs to the bucket and the high value $m^{\frac{2\lambda - j+1}{3 \lambda}}$ otherwise. Figure~\ref{fig:refined_lower_bound_worst_case} shows an example that demonstrates this approach.

\begin{figure}[t]
\centering
\if\drawfigures1
\begin{tikzpicture}
\begin{scope}[scale=0.15,every node/.style={align=center,scale=0.85}]
    \fill[lightgray](0,-36) rectangle (3,-47);
    \fill[lightgray](4,0) rectangle (10,-3);
    \fill[lightgray](4,-33) rectangle (10,-47);
    \fill[lightgray](11,0) rectangle (23,-4);
    \fill[lightgray](11,-12) rectangle (23,-15);
    \fill[lightgray](11,-27) rectangle (23,-32);
    \fill[lightgray](11,-44) rectangle (23,-47);
    \fill[lightgray](24,-15) rectangle (46,-27);
    \fill[lightgray](24,-30) rectangle (46,-40);
    \draw[pattern={Lines[angle=-45,distance=2pt]}](0,0) rectangle (1,-33);
    \draw[pattern={Lines[angle=-45,distance=2pt]}](4,-3) rectangle (5,-30);
    \draw[pattern={Lines[angle=-45,distance=2pt]}](11,-4) rectangle (12,-12);
    \draw[pattern={Lines[angle=-45,distance=2pt]}](11,-17) rectangle (12,-23);
    \draw[pattern={Lines[angle=-45,distance=2pt]}](11,-32) rectangle (12,-44);
    \draw[pattern={Lines[angle=-45,distance=2pt]}](24,-27) rectangle (25,-30);
    \draw[pattern={Lines[angle=-45,distance=2pt]}](24,-42) rectangle (25,-47);
    \draw(0,0) rectangle (3,-47);
    \draw(4,0) rectangle (10,-47);
    \draw(11,0) rectangle (23,-47);
    \draw(24,0) rectangle (46,-47);
    \draw(47,0) rectangle (51,-47);
    \foreach \i in {1,...,3}{
      \foreach \j in {1,...,3}{
        \fill[black](\i - 1,-36 +\i -1) rectangle (\i,-36 + \i);
      }
    }
    \foreach \i in {1,...,6}{
      \foreach \j in {1,...,6}{
        \fill[black](4 + \i - 1,-33 +\i -1) rectangle (4 + \i,-33 + \i);
      }
    }
    \foreach \i in {1,...,12}{
      \foreach \j in {1,...,12}{
        \fill[black](11 + \i - 1,-27 +\i -1) rectangle (11 + \i,-27 + \i);
      }
    }
    \foreach \i in {1,...,15}{
      \foreach \j in {1,...,15}{
        \fill[black](24 + \i - 1,-15 +\i -1) rectangle (24 + \i,-15 + \i);
      }
    }
    \foreach \i in {1,...,7}{
      \foreach \j in {1,...,7}{
        \fill[black](39 + \i - 1,-47 +\i -1) rectangle (39 + \i,-47 + \i);
      }
    }
    \foreach \i in {1,...,4}{
      \foreach \j in {1,...,4}{
        \fill[black](47 + \i - 1,-40 +\i -1) rectangle (47 + \i,-40 + \i);
      }
    }
\end{scope}
\end{tikzpicture}
\fi
\caption{An example for our lower bound construction in Theorem~\ref{thm:omega_log_m_lower_bound_non_stochastic} that illustrates the way in which the agents' valuations are defined. The alternative $\widehat{a}$ picked by the mechanism $\mechanism$ is marked as a black box in every agent's ranking. We assume that $\mechanism$ queried the positions corresponding to the dashed boxes. In this example, the mechanism only queries the first position in a bucket which is without loss of generality. The gray areas correspond to buckets in which all alternatives have high values. White areas correspond to buckets in which alternatives have low values, either because the bucket contains the winning alternative $\widehat{a}$ or because mechanism $\mechanism$ queried the value of an alternative in the bucket.}
\label{fig:refined_lower_bound_worst_case}
\end{figure}
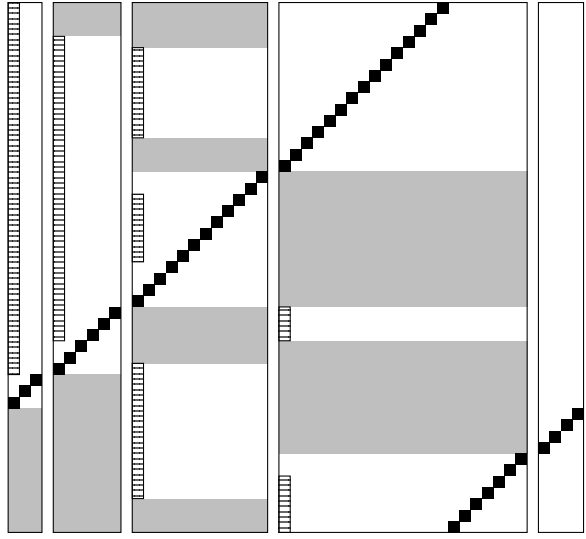

Let $\widehat{a}=\mechanism(P,v)$. Observe that alternative $\widehat{a}$ belongs to bucket $B_j^{(i)}$ for $b_j$ different choices of $i\in N$. Hence,
\begin{align}\label{eq:sw-winner}
\sw(\widehat{a},v)
&= \sum_{i \in N}\,\sum_{j \in [2\lambda]: \widehat{a} \in B_j^{(i)}} m^{\frac{2\lambda - j}{3 \lambda}}
= \sum_{j \in [2\lambda]} b_j \cdot m^{\frac{2\lambda - j}{3 \lambda}}
= \sum_{j \in [2\lambda]} \left\lceil m^{\frac{j}{3 \lambda}}\right\rceil \cdot m^{\frac{2\lambda - j}{3 \lambda}}
\leq 4\lambda m^{2/3}.
\end{align}

We now compute the sum of the social welfare over all alternatives by summing up all the values in every bucket of every agent. To do so, define the subsets $H$ and $L$ of $N\times [2\lambda]$ as follows. $L$ consists of the pairs $(i,j)$ such that either $\mechanism(P,v) \in B_j^{(i)}$ or mechanism $\mechanism$ queried the value of some alternative in bucket $B_j^{(i)}$. Let $H=N\times [2\lambda]\setminus L$. We have
\begin{align}\nonumber
\sum_{a\in A}{\sw(a,v)} &= \sum_{i\in N}{\left(\sum_{j\in [2\lambda]:(i,j)\in L}{b_j\cdot m^{\frac{2\lambda - j}{3 \lambda}}}+\sum_{j\in [2\lambda]:(i,j)\in H}{b_j\cdot m^{\frac{2\lambda - j+1}{3 \lambda}}}\right)}\\\nonumber
&= \sum_{i\in N}{\left(\sum_{j\in [2\lambda]}{b_j\cdot m^{\frac{2\lambda - j}{3 \lambda}}}+\sum_{j\in [2\lambda]:(i,j)\in H}{b_j\cdot \left(m^{\frac{2\lambda - j+1}{3 \lambda}}-m^{\frac{2\lambda - j}{3 \lambda}}\right)}\right)}\\\label{eq:sum-of-sws}
&\geq \sum_{i\in N}{\sum_{j\in [2\lambda]:(i,j)\in H}{b_j\cdot m^{\frac{2\lambda - j+1}{3 \lambda}}}} \geq \sum_{i\in N}{\sum_{j\in [2\lambda]:(i,j)\in H}{m^{\frac{2\lambda +1}{3 \lambda}}}} =|H|\cdot m^{\frac{2\lambda +1}{3 \lambda}}.
\end{align}
Now, observe that for every agent $i$, the set $L$ contains at most $\lambda+1$ pairs $(i,\cdot)$ for the up to $\lambda$ buckets in which $\mechanism$ queried the value of some alternative and (possibly) one extra bucket that contains alternative $\widehat{a}$. Hence, $|L|\leq (\lambda+1)\cdot n$ and, consequently, $|H|\geq (\lambda-1)\cdot n$. Recalling that $n=m$, inequality (\ref{eq:sum-of-sws}) yields
\begin{align}\label{eq:max-sw}
\max_{a\in A}{\sw(a,v)} &\geq \frac{1}{m}\cdot \sum_{a\in A}{\sw(a,v)} \geq (\lambda-1)\cdot m^{\frac{2\lambda +1}{3 \lambda}}.
\end{align}
The desired lower bound of $\frac{\lambda-1}{4\lambda}\cdot m^{\frac{1}{3\lambda}}\geq \frac{1}{8}\cdot m^{\frac{1}{3\lambda}}$ on the distortion now follows from inequalities (\ref{eq:sw-winner}) and (\ref{eq:max-sw}).
\end{proof}

\subsection{Lower bounding the distortion of 1-query mechanisms}\label{sec:worst_case_lower_bound_1_query_binary}

In Section~\ref{sec:1-query}, we saw that a single query per agent is sufficient to guarantee constant average distortion when the agents draw their valuations according to a binary distribution. Such a guarantee is not attainable in the traditional setting of worst-case distortion. Indeed, \citet{ABFV21} proved that any deterministic 1-query mechanism must have distortion $\Omega(m)$. However, their lower bound construction uses valuations that are more complex than binary. Our next result states that, even in the case of binary valuations, the worst-case distortion of deterministic $1$-query mechanisms must still be high.

\begin{theorem}\label{thm:0/1_valued_1_query_lower_bound}
Every deterministic $1$-query mechanism has a worst-case distortion of at least $\Omega(\sqrt{m})$. This is true even if the agents have binary values for each of the alternatives.
\end{theorem}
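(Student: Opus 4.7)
The plan is to work with $n=m$ agents having cyclic rankings (agent $i$ ranks $i\succ_i i+1\succ_i\cdots\succ_i i-1\pmod m$). Under these rankings, a binary valuation is fully described by a single cutoff $c_i\in\{0,1,\ldots,m\}$: agent $i$ assigns value $1$ to the top $c_i$ alternatives and $0$ to the rest. A single query at position $q_i$ returns only the bit $[c_i\ge q_i]$, which is the fundamental reason why 1-query mechanisms cannot pin down the valuation.

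For any deterministic 1-query mechanism with query positions $q_1,\ldots,q_m$, let $k=\lceil\sqrt m\,\rceil$. First I would consider the adversary's response rule ``reply $1$ if $q_i\le k$, reply $0$ if $q_i>k$''; this is consistent, e.g., with every $c_i=k$. Given the mechanism's output $\widehat a$ on the resulting response pattern, the adversary picks $a^*=\widehat a-1\pmod m$ and, for each $i\neq\widehat a$, sets the cutoff $c_i$ to the largest value in its admissible range that excludes $\widehat a$ from agent $i$'s valued set, namely $c_i=\pos_{\succ_i}(\widehat a)-1$ when admissible; for $i=\widehat a$, the adversary sets $c_i$ to its minimum admissible value.

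The key cyclic identity $\pos_{\succ_i}(a^*)=\pos_{\succ_i}(\widehat a)-1$ for every $i\neq\widehat a$ then gives
$$\sw(\widehat a,v)\le 1+|\{i\in S':\pos_{\succ_i}(\widehat a)\le q_i\}|,\qquad \sw(a^*,v)\ge |S'|+|\{i\in D':\pos_{\succ_i}(\widehat a)\le q_i\}|,$$
where $S=\{i:q_i\le k\}$, $D=\{i:q_i>k\}$, and $S',D'$ denote $S\setminus\{\widehat a\}$, $D\setminus\{\widehat a\}$. The cyclic structure immediately bounds $|\{i\in S':\pos_{\succ_i}(\widehat a)\le q_i\}|\le k-1=O(\sqrt m)$. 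A case analysis on $|S|$ versus $|D|$ then completes the proof: if $|S|\ge k$, then $\sw(a^*)\ge\Omega(\sqrt m)$ while $\sw(\widehat a)=O(\sqrt m)$, yielding ratio $\Omega(\sqrt m)$ after averaging over the $m$ cyclic shifts of $\widehat a$; if instead $|D|$ is large, the adversary switches to ``always reply $0$'', which allows every $c_i<\pos_{\succ_i}(\widehat a)$ and thus $\sw(\widehat a)=0$, making the distortion trivially at least $\Omega(\sqrt m)$.

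The main obstacle is balancing the two parts of $\sw(a^*)$ across the spectrum of query depths and, in particular, handling adaptive mechanisms whose queries may depend on prior responses. The cyclic symmetry of the profile is the crucial structural ingredient: it guarantees that by rotating $\widehat a$ around the $m$ cyclic positions, we can always find a shift for which the ratio achieves the claimed $\Omega(\sqrt m)$ bound, uniformly over all query strategies.
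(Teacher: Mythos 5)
The overall strategy—exploit the cyclic structure, describe binary valuations by cutoffs, let the adversary fix concealed values after seeing $\widehat{a}$—is on the right track and shares the skeleton of the paper's construction. However, your adversary's response rule is too generous, and this creates a real gap.

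With the rule ``reply $1$ whenever $q_i\le k$'' ($k\approx\sqrt m$), up to $m$ positive answers can be revealed, so your own bound only gives $\sw(\widehat a,v)\le k$. To conclude, you would need $\sw(a^*,v)=\Omega(k\sqrt m)=\Omega(m)$, but the bound $\sw(a^*,v)\ge|S'|+|\{i\in D':\pos_{\succ_i}(\widehat a)\le q_i\}|$ can be as small as $\Theta(\sqrt m)$. Concretely, take the mechanism with $q_i=k$ for $i\le k$ and $q_i=k+1$ for $i>k$, which outputs $\widehat a=k$. Your rule answers $1$ exactly to agents $1,\dots,k$, so $c_i\ge k$ is revealed for those and $c_i\le k$ for the rest. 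Setting every concealed cutoff to $c_i=k$ is consistent with the transcript, and then every alternative has social welfare exactly $k$ (each appears in the top $k$ of exactly $k$ rankings), so the distortion is $1$. Your Case~1 bound ``$|S|\ge k$ implies ratio $\Omega(\sqrt m)$'' silently requires $|S'|=\Omega(m)$, not $\Omega(\sqrt m)$; and ``averaging over $m$ cyclic shifts of $\widehat a$'' is not available against a deterministic mechanism, which fixes a single $\widehat a$ for a single transcript. Finally, the switch to ``always reply $0$'' when $|D|$ is large is a different adversary run---an adaptive mechanism may issue different queries there, so the size of $D$ observed in the first run tells you nothing about the second.

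The ingredient you are missing is exactly what the paper introduces to control $\sw(\widehat a,v)$: restrict how many ``$1$'' answers the adversary ever reveals. The paper uses $n=t^2\le m$ agents partitioned into $t$ cohorts of size $t$, and reveals $1$ only to the \emph{first} agent of each cohort that is queried at a shallow position $j\le t$. This guarantees at most one positive answer per cohort, and because the agents ranking any fixed alternative in the top $t$ positions span at most two consecutive cohorts, $\sw(\widehat a,v)$ is bounded by a \emph{constant}, not by $\sqrt m$. The two lemmas then cover the cases where some cohort was never queried shallowly (the adversary dumps value into that untouched cohort) versus all cohorts were queried shallowly (the adversary routes value to $a^*=\widehat a-1\bmod t^2$ via the revealed $1$'s). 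Grafting this cohort discipline onto your cutoff framework would close the gap; as written, your adversary reveals too much and the bound collapses to a constant.
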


\begin{proof}
Let $m\geq 16$ and $t$ be the largest even integer such that $t^2\leq m$. Clearly, $t\in\Omega(\sqrt{m})$. We consider the profile $P = \{\succ_i\}_{i\in N}$ with $n=t^2$ agents so that for every agent $i \in N$, the ranking $\succ_i$ has the form
$$i \succ_i i+1 \succ_i \dots \succ_i t^2 \succ_i 1 \succ_i ... \succ_i i-1 \succ_i t^2+1 \succ_i \ldots \succ_i m.$$
We divide the $t^2$ agents into $t$ groups, each containing $t$ agents. We call these groups {\em cohorts}. For $k \in [t]$, the $k$-th cohort $\cohort_k$ contains the agents $(k-1)t +1,\ldots, kt$. Due to the symmetry of $P$ and the assumption that $n=t^2$, an element in $\cohort_k$ may refer to an agent $i\in\cohort_k$ as well as to an alternative $j$ that is the top-ranked alternative of agent $j\in\cohort_k$.  Figure~\ref{fig:01_1_query_lower_bound_construction} shows an example of our lower bound construction.

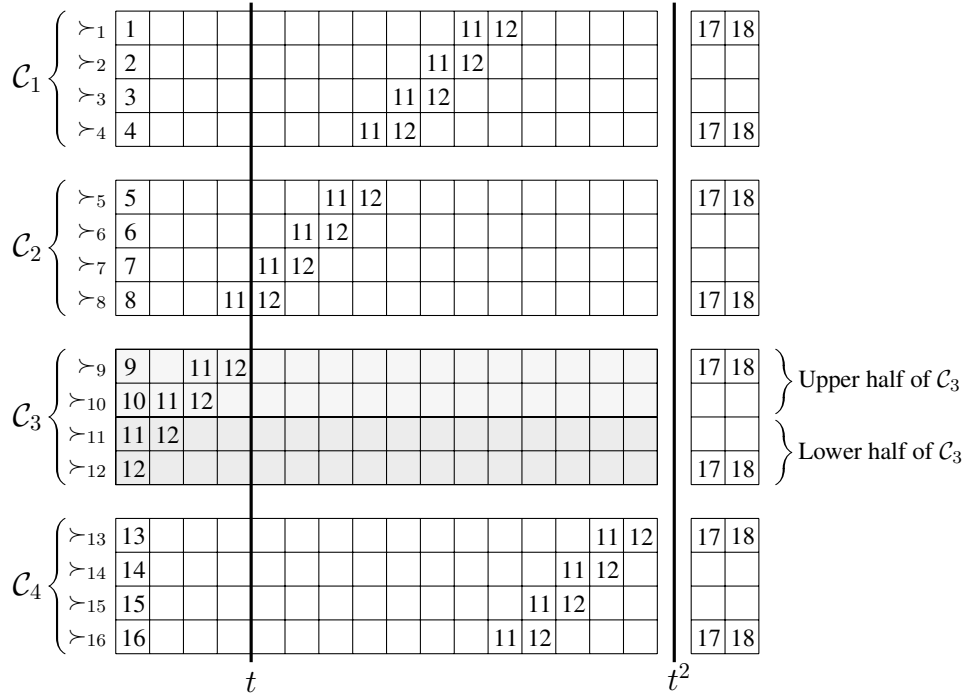
\begin{figure}[t]
\centering
\if\drawfigures1
\begin{tikzpicture}
\begin{scope}[scale=0.45,every node/.style={align=center,scale=0.85}]
    \draw (0, 0) grid (16,-4);
    \draw (0,-5) grid (16,-9);
    \draw[fill=lightgray!15] (0,-10) grid (16,-12) rectangle (0,-10);
    \draw[fill=lightgray!30] (0,-12) grid (16,-14) rectangle (0,-12);
    \draw (0,-15) grid (16,-19);
    \draw [decorate,decoration={calligraphic brace,amplitude=6pt},thick] (-1.5,-4) -- (-1.5,0) node[left=6pt,pos=.5]{\Large $\cohort_1$};
    \draw (0,-1) node[left,yshift=.25cm]{$\succ_1$};
    \draw (0,-2) node[left,yshift=.25cm]{$\succ_2$};
    \draw (0,-3) node[left,yshift=.25cm]{$\succ_3$};
    \draw (0,-4) node[left,yshift=.25cm]{$\succ_4$};
    \draw (0,-1) node[right,yshift=.25cm]{1};
    \draw (10,-1) node[right,yshift=.25cm,xshift=-.05cm]{11};
    \draw (11,-1) node[right,yshift=.25cm,xshift=-.05cm]{12};
    \draw (0,-2) node[right,yshift=.25cm]{2};
    \draw (9,-2) node[right,yshift=.25cm,xshift=-.05cm]{11};
    \draw (10,-2) node[right,yshift=.25cm,xshift=-.05cm]{12};
    \draw (0,-3) node[right,yshift=.25cm]{3};
    \draw (8,-3) node[right,yshift=.25cm,xshift=-.05cm]{11};
    \draw (9,-3) node[right,yshift=.25cm,xshift=-.05cm]{12};
    \draw (0,-4) node[right,yshift=.25cm]{4};
    \draw (7,-4) node[right,yshift=.25cm,xshift=-.05cm]{11};
    \draw (8,-4) node[right,yshift=.25cm,xshift=-.05cm]{12};
     \draw [decorate,decoration={calligraphic brace,amplitude=6pt},thick] (-1.5,-9) -- (-1.5,-5) node[left=6pt,pos=.5]{\Large $\cohort_2$};
    \draw (0,-6) node[left,yshift=.25cm]{$\succ_5$};
    \draw (0,-7) node[left,yshift=.25cm]{$\succ_6$};
    \draw (0,-8) node[left,yshift=.25cm]{$\succ_7$};
    \draw (0,-9) node[left,yshift=.25cm]{$\succ_8$};
    \draw (0,-6) node[right,yshift=.25cm]{5};
    \draw (6,-6) node[right,yshift=.25cm,xshift=-.05cm]{11};
    \draw (7,-6) node[right,yshift=.25cm,xshift=-.05cm]{12};
    \draw (0,-7) node[right,yshift=.25cm]{6};
    \draw (5,-7) node[right,yshift=.25cm,xshift=-.05cm]{11};
    \draw (6,-7) node[right,yshift=.25cm,xshift=-.05cm]{12};
    \draw (0,-8) node[right,yshift=.25cm]{7};
    \draw (4,-8) node[right,yshift=.25cm,xshift=-.05cm]{11};
    \draw (5,-8) node[right,yshift=.25cm,xshift=-.05cm]{12};
    \draw (0,-9) node[right,yshift=.25cm]{8};
    \draw (3,-9) node[right,yshift=.25cm,xshift=-.05cm]{11};
    \draw (4,-9) node[right,yshift=.25cm,xshift=-.05cm]{12};
    \draw [decorate,decoration={calligraphic brace,amplitude=6pt},thick] (-1.5,-14) -- (-1.5,-10) node[left=6pt,pos=.5]{\Large $\cohort_3$};
    \draw [decorate,decoration={calligraphic brace,amplitude=6pt},thick] (19.5,-10) -- (19.5,-11.9) node[right=6pt,pos=.5]{Upper half of $\cohort_3$};
    \draw [decorate,decoration={calligraphic brace,amplitude=6pt},thick] (19.5,-12.1) -- (19.5,-14) node[right=6pt,pos=.5]{Lower half of $\cohort_3$};
    \draw (0,-11) node[left,yshift=.25cm]{$\succ_9$};
    \draw (0,-12) node[left,yshift=.25cm]{$\succ_{10}$};
    \draw (0,-13) node[left,yshift=.25cm]{$\succ_{11}$};
    \draw (0,-14) node[left,yshift=.25cm]{$\succ_{12}$};
    \draw (0,-11) node[right,yshift=.25cm]{9};
    \draw (2,-11) node[right,yshift=.25cm,xshift=-.05cm]{11};
    \draw (3,-11) node[right,yshift=.25cm,xshift=-.05cm]{12};
    \draw (0,-12) node[right,yshift=.25cm,xshift=-.05cm]{10};
    \draw (1,-12) node[right,yshift=.25cm,xshift=-.05cm]{11};
    \draw (2,-12) node[right,yshift=.25cm,xshift=-.05cm]{12};
    \draw (0,-13) node[right,yshift=.25cm,xshift=-.05cm]{11};
    \draw (1,-13) node[right,yshift=.25cm,xshift=-.05cm]{12};
    \draw (0,-14) node[right,yshift=.25cm,xshift=-.05cm]{12};
    \draw [decorate,decoration={calligraphic brace,amplitude=6pt},thick] (-1.5,-19) -- (-1.5,-15) node[left=6pt,pos=.5]{\Large $\cohort_4$};
    \draw (0,-16) node[left,yshift=.25cm]{$\succ_{13}$};
    \draw (0,-17) node[left,yshift=.25cm]{$\succ_{14}$};
    \draw (0,-18) node[left,yshift=.25cm]{$\succ_{15}$};
    \draw (0,-19) node[left,yshift=.25cm]{$\succ_{16}$};
    \draw (0,-16) node[right,yshift=.25cm,xshift=-.05cm]{13};
    \draw (14,-16) node[right,yshift=.25cm,xshift=-.05cm]{11};
    \draw (15,-16) node[right,yshift=.25cm,xshift=-.05cm]{12};
    \draw (0,-17) node[right,yshift=.25cm,xshift=-.05cm]{14};
    \draw (13,-17) node[right,yshift=.25cm,xshift=-.05cm]{11};
    \draw (14,-17) node[right,yshift=.25cm,xshift=-.05cm]{12};
    \draw (0,-18) node[right,yshift=.25cm,xshift=-.05cm]{15};
    \draw (12,-18) node[right,yshift=.25cm,xshift=-.05cm]{11};
    \draw (13,-18) node[right,yshift=.25cm,xshift=-.05cm]{12};
    \draw (0,-19) node[right,yshift=.25cm,xshift=-.05cm]{16};
    \draw (11,-19) node[right,yshift=.25cm,xshift=-.05cm]{11};
    \draw (12,-19) node[right,yshift=.25cm,xshift=-.05cm]{12};
    \draw[very thick] (4,.25) -- (4,-19.25) node[below]{\Large $t$};
    \draw[very thick] (16.5,.25) -- (16.5,-19.25) node[below,yshift=.12cm,xshift=.08cm]{\Large $t^2$};
    \draw (17, 0) grid (19,-4);
    \draw (17,-1) node[right,yshift=.25cm,xshift=-.05cm]{17};
    \draw (18,-1) node[right,yshift=.25cm,xshift=-.05cm]{18};
    \draw (17,-4) node[right,yshift=.25cm,xshift=-.05cm]{17};
    \draw (18,-4) node[right,yshift=.25cm,xshift=-.05cm]{18};
    \draw (17,-5) grid (19,-9);
    \draw (17,-6) node[right,yshift=.25cm,xshift=-.05cm]{17};
    \draw (18,-6) node[right,yshift=.25cm,xshift=-.05cm]{18};
    \draw (17,-9) node[right,yshift=.25cm,xshift=-.05cm]{17};
    \draw (18,-9) node[right,yshift=.25cm,xshift=-.05cm]{18};
    \draw (17,-10) grid (19,-14);
    \draw (17,-11) node[right,yshift=.25cm,xshift=-.05cm]{17};
    \draw (18,-11) node[right,yshift=.25cm,xshift=-.05cm]{18};
    \draw (17,-14) node[right,yshift=.25cm,xshift=-.05cm]{17};
    \draw (18,-14) node[right,yshift=.25cm,xshift=-.05cm]{18};
    \draw (17,-15) grid (19,-19);
    \draw (17,-16) node[right,yshift=.25cm,xshift=-.05cm]{17};
    \draw (18,-16) node[right,yshift=.25cm,xshift=-.05cm]{18};
    \draw (17,-19) node[right,yshift=.25cm,xshift=-.05cm]{17};
    \draw (18,-19) node[right,yshift=.25cm,xshift=-.05cm]{18};
\end{scope}
\end{tikzpicture}
\fi
\caption{An example of our lower bound construction in Theorem~\ref{thm:0/1_valued_1_query_lower_bound} where $m = 18$. Hence, we set $t=4$ and $n=t^2=16$ resulting in the profile $P$ shown above. In every agent's ranking (horizontal bars), the example mentions the top-ranked alternative. Additionally, we marked alternatives $11$ and $12$ in every ranking in order to showcase the symmetry of $P$. Alternatives $17$ and $18$ are shown for two agents of every cohort. Notice that these alternatives appear in the exact same positions of every agent's ranking.}
\label{fig:01_1_query_lower_bound_construction}
\end{figure}

Let $\val_{i,j}$ denote the value that agent $i$ has for the alternative in the $j$-th position of her ranking. For every query that the mechanism makes, we reveal the following information:
\begin{itemize}
\item If agent $i$ is the first agent of cohort $\cohort_k$ who is queried by the mechanism at a position $j \leq t$, we reveal $\val_{i,j} = 1$.
\item Otherwise, we reveal $\val_{i,j} = 0$.
\end{itemize}
The latter item includes the case where the mechanism queries an agent at any position $j > t$ as well as the case where the mechanism already queried an agent of cohort $\cohort_k$ at a position $j \leq t$. 

With the next two lemmas, we distinguish two cases that ---taken together--- cover all possible choices a mechanism may make for querying positions in $P$. In each case, we are able to fix the agents' valuations in a way that is consistent with the values revealed to the mechanism and results in a high distortion. The following piece of notation will be helpful for this purpose. Let $N_{\leq j}(a)$ be the set of agents that rank alternative $a$ at position $j$ or higher, that is,
$$N_{\leq j}(a)=\left\{i\in N \,:\, \pos_{\succ_i}(a) \leq j\right\}.$$

\begin{lemma}\label{claim:01_1_query_lower_bound_query_every_cohort}
Assume that there is a cohort $\cohort_k$ such that mechanism $\mechanism$ does not query any agent $i \in \cohort_k$ at a position $j \leq t$. Then, there exists a choice of valuations $v$ that is consistent with profile $P$ and the values revealed to $\mechanism$, such that
$$\frac{\max_{a \in A} \sw(a,v)}{\sw(\mechanism(P,v),v)} \geq t/2.$$
\end{lemma}

\begin{proof}
First, assume that $\mechanism(P,v) = a \notin \cohort_k$. Consider the alternative $a' = kt \in \cohort_k$. Note that $N_{\leq t}(a') = \cohort_k$, that is, all agents of cohort $\cohort_k$ rank alternative $a'$ at a position $j \leq t$. Since the algorithm did not query any of these positions, we are free to fix the concealed values for every $i \in N_{\leq t}(a')$ such that
$$\val_{i,j}=
\begin{cases}
1 & \text{for every position $j \leq \pos_{\succ_i}(a')$}\\
0 & \text{otherwise.}
\end{cases}$$
The remaining concealed values (outside of cohort $\cohort_k$) are set to $0$ except for those positions where a value of $1$ is implied by a revealed value of $1$ at a position further down in the ranking. Thereby, $\sw(a',v) = t$. Furthermore, notice that the set $N_{\leq t}(a)$ intersects with at most two cohorts and alternative $a$ receives a value of $0$ from every agent $i\in N_{\leq t}(a')$. Thus, $\sw(a,v)\leq 2$, which shows that the lemma holds for $\mechanism(P,v) = a \notin \cohort_k$.

Now, let $\mechanism(P,v) = a \in \cohort_k$. We will further distinguish between the cases where $a \leq (k-1/2)t$ (i.e., $a$ is in the {\em upper half} of $\cohort_k$; see Figure~\ref{fig:01_1_query_lower_bound_construction}) and $a > (k-1/2)t$ (i.e., $a$ is in the {\em lower half} of $\cohort_k$). For both cases, we first show that there is an alternative $a'\neq a$ such that $\sw(a',v) \geq t/2$.
\medskip

{\bf Case 1: $a \leq (k-1/2)t$.} Consider alternative $a' = kt$. Note that $N_{\leq t}(a') = \cohort_k$. Furthermore, since $a$ is in the upper half of the cohort, it holds that
$$|N_{\leq t}(a') \setminus N_{\leq t}(a)| \geq t/2.$$
Since $\mechanism$ did not query any agent from $\cohort_k$ at a position $j \leq t$, we can fix the concealed values for every $i\in N_{\leq t}(a') \setminus N_{\leq t}(a)$ such that
$$\val_{i,j}=
\begin{cases}
1 & \text{for every position $j \leq \pos_{\succ_i}(a')$}\\
0 & \text{otherwise.}
\end{cases}$$
Thereby, $\sw(a',v) \geq t/2$.
\medskip

{\bf Case 2: $a > (k-1/2)t$.} Let $a' = a-1$ and note that
$$|N_{\leq t}(a') \cap \cohort_k| \geq t/2.$$
Since $\mechanism$ did not query any agent of cohort $\cohort_k$ at a position $j \leq t$, we are free to fix the concealed values for every agent $i \in N_{\leq t}(a') \cap \cohort_k$ such that
$$\val_{i,j}=
\begin{cases}
1 & \text{for every position $j \leq \pos_{\succ_i}(a')$}\\
0 & \text{otherwise.}
\end{cases}
$$
Thereby, $\sw(a',v) \geq t/2$ in this case as well.
\medskip

In both cases, the remaining concealed values are set to $0$ except for those positions where a value of $1$ is implied by a revealed value of $1$ at a position further down in the ranking. In particular, this means that $\val_i(a) = 0$ for every agent $i \in N_{\leq t}(a) \cap \cohort_k$. By assumption, $\mechanism$ did not query any of these agents at a position $j\leq t$. In case 1, among cohort $\cohort_k$, only the agents $N_{\leq t}(a') \setminus N_{\leq t}(a) = \cohort_k \setminus N_{\leq t}(a)$ assign a value of $1$ to any alternative. In case 2, among cohort $\cohort_k$, only the agents $N_{\leq t}(a') \cap \cohort_k$ have a value of $1$ for any alternative and exclusively for those alternatives ranked above $a$. Finally, by our construction, there is at most one other cohort $k'$ such that $N_{\leq t}(a) \cap \cohort_{k'} \neq \emptyset$. Hence, $\sw(a,v) \leq 1$ and the lemma follows.
\end{proof}

\begin{lemma}\label{claim:01_1_query_lower_bound_omega_sqrt_m}
Assume that mechanism $\mechanism$ queries every cohort at at least one position $j \leq t$. Then, there exists a choice of valuations $v$ that is consistent with profile $P$ and the values revealed to $\mechanism$, such that
$$\frac{\max_{a \in A} \sw(a,v)}{\sw(\mechanism(P,v),v)} \geq t/2-1.$$
\end{lemma}

\begin{proof}
Since no agent gives any value to alternatives $t^2+1, ..., m$, the distortion is infinite when $\mechanism(P,v) > t^2$. Hence, assume that there is a cohort $\cohort_k$ such that $\mechanism(P,v) = a \in \cohort_k$. Consider the alternative $a' = a-1 \mod t^2$. The set of agents $N_{\leq t}(a)$ and $N_{\leq t}(a')$ intersect with at most two cohorts, namely, cohort $k$ and cohort $k' = k-1 \mod t^2$. For every cohort $\cohort_{\ell}$ where $\ell \neq k,k'$, it holds by our assumption that there is an agent $i_{\ell}$ who is the {\em first agent in this cohort} to be queried by $\mechanism$ at a position $j_{\ell}\leq t$. At this position, there is an alternative other than $a$ or $a'$, and we revealed the value $\val_{i_{\ell},j_{\ell}}=1$; see above. Since $\mechanism$ can make at most one query to each agent, we can set the remaining concealed values such that
$$\val_{i_{\ell},j}=
\begin{cases}
1 & \text{for every position $j \leq \pos_{\succ_{i_\ell}}(a')$}\\
0 & \text{otherwise}
\end{cases}$$
for every $\ell \neq k,k'$. This implies that $\val_{i_{\ell}}(a) = 0$ for every $\ell \neq k,k'$. The remaining concealed values are set to 0 except for those positions in $\cohort_k, \cohort_{k'}$ where a value of 1 is implied by a revealed value of 1 at a position further down in the ranking. Then, $\sw(a,v)\leq 2$ since $N_{\leq t}(a)$ can intersect only with cohorts $\cohort_k$ and $\cohort_{k'}$. On the other hand, by setting the concealed values for every cohort $\ell \neq k,k'$ as described, it holds that $\sw(a',v) \geq t-2$. From this, the lemma follows.
\end{proof}

Theorem~\ref{thm:0/1_valued_1_query_lower_bound} now follows by combining Lemmas~\ref{claim:01_1_query_lower_bound_query_every_cohort} and~\ref{claim:01_1_query_lower_bound_omega_sqrt_m}.
\end{proof}

\section{Discussion and open problems}
We have initiated the study of average distortion in a simple stochastic setting that creates impartial culture electorates. The main open problem is whether constant average distortion is possible with a small number of queries per agent for general probability distributions of valuations. Throughout the paper, we assume that the distribution is given as part of the input, and this information is crucial to make our mechanisms work. It would be interesting to explore whether this ---admittedly strong--- requirement can be removed. Other natural extensions of our model include different distributions per alternative or distributions that produce random valuations satisfying the unit-sum or unit-range assumption. These latter assumptions are clearly beyond the reach of our current analysis techniques, as they necessarily imply correlations between the random values an agent has for the alternatives.

For the worst-case setting, we have improved the previously best-known lower bound on the number of queries per agent that are necessary for constant worst-case distortion by deterministic mechanisms. Still, the conjecture of~\citet{ABFV21} that constant worst-case distortion is possible with $\Theta(\log{m})$ deterministic queries per agent is wide open. Furthermore, we have also demonstrated that the use of randomization can yield worst-case bounds that the known deterministic mechanisms cannot achieve. Exploring whether there is a separation between deterministic and randomized mechanisms in terms of their worst-case distortion for a given number of queries per agent is another challenging problem that deserves investigation.

\bibliography{distortion-ice}

\newpage
\appendix

\section{Proof of Lemma~\ref{lem:balanced}}\label{app:sec:lem:balanced}
For the proof of Lemma~\ref{lem:balanced}, we present another technical statement as Lemma~\ref{lem:t-t-primes}. The latter lemma formalizes the intuition that the expectation for the maximum social welfare only increases when a given number of values of 1 is distributed more evenly between two agents. The proof of the lemma leans heavily on notation. We therefore highlight two key properties that the proof of Lemma~\ref{lem:t-t-primes} exploits. These properties depend crucially on the fact that the agents' rankings are uniformly random and independent. Let $r$ be any non-negative integer.
\begin{itemize}
\item {\bf Property 1}: Let $u = (u_1,u_2,...,u_n)$ be a vector with non-negative integer entries. Now, consider the question whether, for a random draw of $v\sim F$, $P \sim \PP(v)$, there exists an alternative $a\in A$ that appears in the rankings of at least $r$ agents such that for each of these agents $i$ it holds that $\pos_{\succ_i}(a) \leq u_i$. Whether or not such an alternative exists does not depend on the number of values of 1 underlying the agents' rankings.
\item {\bf Property 2}: Let $a_1$ be {\em any} alternative appearing in {\em any} position in the ranking of agent 1 and let $a_2$ be {\em any} alternative appearing in {\em any} position in the ranking of agent 2. The probability of having social welfare of exactly $r$ for agents $3,...,n$ is exactly the same for $a_1$ and $a_2$.
\end{itemize}
We continue with the statement and formal proof of Lemma~\ref{lem:t-t-primes}.
\begin{lemma}\label{lem:t-t-primes}
Consider two vectors $t=(t_1,t_2,..., t_n)$ and $t'=(t'_1,t'_2,..., t'_n)$ with non-negative integer entries such that $t_j\geq t_k+2$ for two entries $j,k\in [n]$ and $t'_j = t_j -1, t'_k = t_k +1$ and $t'_i = t_i$ for all $i \in [n]\setminus\{j,k\}$. Then,
\begin{align*}
\E_{v\sim F}\left[\max_{a\in A}{\sw(a,v)}|X_i(v)=t_i, i\in [n]\right] &\leq
\E_{v\sim F}\left[\max_{a\in A}{\sw(a,v)}|X_i(v)=t'_i, i\in [n]\right].
\end{align*}
\end{lemma}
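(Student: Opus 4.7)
The plan is to first reduce the claim to a two-agent inequality by conditioning on the rankings of the agents outside $\{j,k\}$, and then to establish the reduced statement by the natural coupling that uses the same two permutations in both scenarios.

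For the reduction I condition on the rankings $\succ_i$ of every $i\in [n]\setminus\{j,k\}$. Because $t_i=t'_i$ for each such agent, the ``background welfare'' $f(a)=|\{i\in [n]\setminus\{j,k\}:\pos_{\succ_i}(a)\leq t_i\}|$ has the same distribution in both experiments; this is essentially Property~1 applied to those agents, using the fact that conditional on $X_i(v)=t_i$ each ranking $\succ_i$ is a uniformly random permutation of $A$ whose law does not depend on $t_i$. It therefore suffices to prove that, for every function $f:A\to\mathbb{Z}_{\geq 0}$,
\[
\E\bigl[\max_{a\in A}\bigl(f(a)+\one\{\pi(a)\leq t_j\}+\one\{\sigma(a)\leq t_k\}\bigr)\bigr]\leq\E\bigl[\max_{a\in A}\bigl(f(a)+\one\{\pi(a)\leq t_j-1\}+\one\{\sigma(a)\leq t_k+1\}\bigr)\bigr],
\]
where $\pi$ and $\sigma$ are independent uniformly random permutations of $A$. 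For this two-agent inequality I use the coupling that draws the same $\pi$ and $\sigma$ in both scenarios. Writing $a^\ast=\pi^{-1}(t_j)$ and $b^\ast=\sigma^{-1}(t_k+1)$, the welfare vectors agree everywhere except that $\sw^B(a^\ast)=\sw^A(a^\ast)-1$ and $\sw^B(b^\ast)=\sw^A(b^\ast)+1$, so $\max\sw^B\geq\max\sw^A$ holds pointwise whenever $a^\ast$ is not the unique argmax of $\sw^A$.

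The main obstacle will be showing that, once averaged over $\pi$ and $\sigma$, the gain from the event ``$b^\ast$ lifts the maximum'' at least compensates for the loss from ``$a^\ast$ kills the unique argmax''. My plan is to condition further on the subsets $V_j=\{a:\pi(a)\leq t_j\}$ and $V_k=\{a:\sigma(a)\leq t_k\}$: under this conditioning $\sw^A$ is deterministic, $a^\ast$ is uniform on $V_j$, $b^\ast$ is uniform on $A\setminus V_k$, and the two are independent. The exchangeability of alternatives expressed by Property~2 then lets me compare the conditional loss and gain contributions term by term over the possible shapes of $\sw^A$ near its maximum; the assumption $t_j\geq t_k+2$ (which in particular guarantees $|V_j\setminus V_k|\geq 2$) is what ultimately tilts this comparison in favour of the gain. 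I expect this accounting to be the notation-heavy core of the proof, consistent with the excerpt's warning, but no single step beyond the symmetry already supplied by Property~2 appears to be conceptually difficult.
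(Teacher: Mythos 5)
Your coupling idea is appealing, but the plan as stated has a genuine gap: after conditioning on $V_j$, $V_k$ (and $f$), the inequality $\E[\max_a \sw^B(a)] \geq \E[\max_a \sw^A(a)]$ can \emph{fail}, so no shape-by-shape comparison of ``loss'' versus ``gain'' terms can close the argument. Here is a concrete counterexample, even with $|V_j\setminus V_k|\geq 2$ in force. Take $n=2$, $m=4$, $t=(3,1)$, $t'=(2,2)$, and condition on $V_1=\{a_1,a_2,a_3\}$, $V_2=\{a_1\}$. Then $\sw^A(a_1)=2$, $\sw^A(a_2)=\sw^A(a_3)=1$, $\sw^A(a_4)=0$, the unique argmax is $a_1$, $a^\ast$ is uniform on $\{a_1,a_2,a_3\}$, and $b^\ast$ is uniform on $\{a_2,a_3,a_4\}$, independently. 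The gain event $\{\sw^A(b^\ast)=\max\sw^A\}$ is impossible, since $a_1\notin A\setminus V_2$, while the loss event $\{a^\ast=a_1,\,b^\ast=a_4\}$ has probability $1/9$; hence the conditional expectation \emph{drops} by $1/9$. (The lemma is still true here unconditionally: $\E[\max\sw^A]=7/4$ and $\E[\max\sw^B]=11/6$.) This shows that the gains must come from \emph{other} realizations of $(V_j,V_k)$, and your plan does not explain how to transfer probability mass across those realizations. The issue is not cosmetic: the very circumstance your plan isolates as problematic (the extra unit from agent $k$ would need to land on the argmax, but the argmax is exactly the alternative that $A\setminus V_k$ cannot contain) is forced by the conditioning, so Property~2-style exchangeability within that conditioning cannot rescue it.

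The paper sidesteps the issue by never freezing $(V_j,V_k)$. It writes $\E[\max_a\sw(a,v)\mid C(\hat t)]=\sum_{r\geq 1}\Pr[\max_a\sw(a,v)\geq r\mid C(\hat t)]$, introduces the event $M(r)$ that some alternative already reaches level $r$ using only the positions $u=(t_1-1,t_2,\ldots,t_n)$ common to both vectors, and observes that $\Pr[M(r)]$ is identical for $t$ and $t'$. Under $\overline{M}(r)$ there is at most one ``boundary'' candidate per scenario (the alternative at position $t_1$ of agent $1$, respectively at position $t_2+1$ of agent $2$), and the conditional probability that this boundary alternative still reaches level $r$ reduces, by exchangeability, to an expression linear in $t_2/m$ versus $(t_1-1)/m$; the hypothesis $t_1\geq t_2+2$ then gives the inequality level by level in $r$. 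If you want to keep the coupling flavour, you would at minimum need to decompose by the threshold $r$ the way the paper does, rather than by the realized pair $(V_j,V_k)$.
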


\begin{proof}
Without loss of generality, we assume that $j=1$ and $k=2$. For each choice of vector $\hat{t} \in \{t,t'\}$, we will abbreviate the event $X_i(v)=\hat{t}_i$ for $i\in [n]$ by $C(\hat{t})$. Define the vector $u = (u_1,u_2,...,u_n) = (t_1-1,t_2,t_3,...,t_n)$ and, for a non-negative integer $r$, let $M(r)$ be the event that there is an alternative $a\in A$ such that $\sum_{i\in[n]}\one\{\pos_{\succ_i}(a)\leq u_i\}\geq r$. We denote by $\overline{M}(r)$ the complement event.
By the properties of the expectation and using the law of total expectation, we have
\begin{align*}
&\E_{v\sim F}\left[\max_{a\in A}\sw(a,v)|C(\hat{t})\right]\\
&= \sum_{r=1}^n\Pr_{v\sim F}\left[\max_{a\in A}\sw(a,v)\geq r|C(\hat{t})\right]\\
&= \sum_{r=1}^n\left(\Pr_{\substack{v\sim F\\P\sim \PP(v)}}\left[M(r)|C(\hat{t})\right] \cdot \Pr_{v\sim F}\left[\max_{a\in A}\sw(a,v)\geq r|M(r),C(\hat{t})\right]\right. \\
&\hspace{36pt}+ \left.\Pr_{\substack{v\sim F\\P\sim \PP(v)}}\left[\overline{M}(r)|C(\hat{t})\right] \cdot \Pr_{v\sim F}\left[\max_{a\in A}\sw(a,v)\geq r|\overline{M}(r),C(\hat{t})\right]\right)\\
&= \sum_{r=1}^n\left(\Pr_{\substack{v\sim F\\P\sim \PP(v)}}\left[M(r)\right]+\Pr_{\substack{v\sim F\\P\sim \PP(v)}}\left[\overline{M}(r)\right] \cdot \Pr_{v\sim F}\left[\max_{a\in A}\sw(a,v)\geq r|\overline{M}(r),C(\hat{t})\right]\right).
\end{align*}
On the last line, the conditions $M(r)$ and $C(\hat{t})$ combined imply that there is indeed an alternative with social welfare at least $r$. We also used the observation (i.e., property 1 above) that the conditions $M(r)$ and $\overline{M}(r)$ only depend on the realization of the agents' (uniformly random) rankings. Now, notice that in the previous equality only the term $\Pr[\max_{a\in A}\sw(a,v)\geq r|\overline{M}(r),C(\hat{t})]$ depends on the choice of $\hat{t}$ among $t,t'$.

In the following, we distinguish between $\hat{t} = t$ (case 1) and $\hat{t} = t'$ (case 2). Let $a_1,a_2$ be the random alternatives appearing in position $t_1$ of agent 1's ranking and position $t_2+1$ of agent 2's ranking. Under the conditions $\overline{M}(r), C(\hat{t})$ combined, only the alternatives $a_1,a_2$ may have social welfare of at least $r$.

\paragraph{Case 1.} $\hat{t} = t = (t_1,t_2,t_3,..., t_n)$. Alternative $a_2$ has social welfare of at most $r-1$ due to $\overline{M}(r), C(t)$ and the fact that $\pos_{\succ_2}(a_2)>t_2$. Hence, only alternative $a_1$ can have social welfare of at least $r$ such that
\begin{align*}
&\Pr_{v\sim F}\left[\max_{a\in A}\sw(a,v)\geq r|\overline{M}(r),C(t)\right]\\
&= \Pr_{v\sim F}\left[\sw(a_1,v)\geq r|\overline{M}(r),C(t)\right]\\
&= \Pr_{\substack{v\sim F\\P\sim \PP(v)}}\left[1+\one\{\pos_{\succ_2}(a_1) \leq t_2\}+\sum_{i=3}^n\one\{\pos_{\succ_i}(a_1)\leq t_i\}\geq r|\overline{M}(r)\right]\\
&= \Pr_{\substack{v\sim F\\P\sim \PP(v)}}\left[\one\{\pos_{\succ_2}(a_1) \leq t_2\}+\sum_{i=3}^n\one\{\pos_{\succ_i}(a_1)\leq t_i\}= r-1\right].
\end{align*}
The last equality follows since, under the condition $\overline{M}(r)$, alternative $a_1$ can appear at positions $t_i$ or above for at most $r-1$ agents among the agents $2,3,...,n$. Applying the law of total probability yields
\begin{align*}
&\Pr_{\substack{v\sim F\\P\sim \PP(v)}}\left[\one\{\pos_{\succ_2}(a_1) \leq t_2\}+\sum_{i=3}^n\one\{\pos_{\succ_i}(a_1)\leq t_i\}= r-1\right]\\
&= \Pr_{\substack{v\sim F\\P\sim \PP(v)}}\left[\sum_{i=3}^n\one\{\pos_{\succ_i}(a_1)\leq t_i\}= r-1\right]\cdot 1\\
&\hspace{24pt}+ \Pr_{\substack{v\sim F\\P\sim \PP(v)}}\left[\sum_{i=3}^n\one\{\pos_{\succ_i}(a_1)\leq t_i\}= r-2\right]\cdot \Pr_{\substack{v\sim F\\P\sim \PP(v)}}\left[\one\{\pos_{\succ_2}(a_1) \leq t_2\}=1\right]\\
&\hspace{24pt}+ \Pr_{\substack{v\sim F\\P\sim \PP(v)}}\left[\sum_{i=3}^n\one\{\pos_{\succ_i}(a_1)\leq t_i\}< r-2\right]\cdot 0\\
&= \Pr_{\substack{v\sim F\\P\sim \PP(v)}}\left[\sum_{i=3}^n\one\{\pos_{\succ_i}(a_1)\leq t_i\}= r-1\right]
 + \Pr_{\substack{v\sim F\\P\sim \PP(v)}}\left[\sum_{i=3}^n\one\{\pos_{\succ_i}(a_1)\leq t_i\}= r-2\right]\cdot \frac{t_2}{m}.
\end{align*}

\paragraph{Case 2.} $\hat{t} = t'= (t_1-1,t_2+1,t_3,..., t_n)$. Similar to the previous case, we obtain
\begin{align*}
&\Pr_{v\sim F}\left[\max_{a\in A}\sw(a,v)\geq r|\overline{M}(r),C(t')\right]\\
&= \Pr_{\substack{v\sim F\\P\sim \PP(v)}}\left[\one\{\pos_{\succ_1}(a_2) \leq t_2\}+\sum_{i=3}^n\one\{\pos_{\succ_i}(a_2)\leq t_i\}= r-1\right]\\
&= \Pr_{\substack{v\sim F\\P\sim \PP(v)}}\left[\sum_{i=3}^n\one\{\pos_{\succ_i}(a_2)\leq t_i\}= r-1\right]
 + \Pr_{\substack{v\sim F\\P\sim \PP(v)}}\left[\sum_{i=3}^n\one\{\pos_{\succ_i}(a_2)\leq t_i\}= r-2\right]\cdot \frac{t_1-1}{m}.
\end{align*}

\vspace{12pt}

We now use the equalities obtained for the two cases to show that, for every $r\in[n]$, the probability $\Pr[\max_{a\in A}\sw(a,v)\geq r|\overline{M}(r),C(\hat{t})]$ is greater for $\hat{t} = t'$ than for $\hat{t} = t$ which proves the lemma.

As highlighted above by property 2, it holds that
$$\Pr_{\substack{v\sim F\\P\sim \PP(v)}}\left[\sum_{i=3}^n\one\{\pos_{\succ_i}(a_1)\leq t_i\}= r-1\right]
= \Pr_{\substack{v\sim F\\P\sim \PP(v)}}\left[\sum_{i=3}^n\one\{\pos_{\succ_i}(a_2)\leq t_i\}= r-1\right],$$
and
$$\Pr_{\substack{v\sim F\\P\sim \PP(v)}}\left[\sum_{i=3}^n\one\{\pos_{\succ_i}(a_1)\leq t_i\}= r-2\right]
= \Pr_{\substack{v\sim F\\P\sim \PP(v)}}\left[\sum_{i=3}^n\one\{\pos_{\succ_i}(a_2)\leq t_i\}= r-2\right]$$
due to the agents' rankings being uniformly random. Finally, by the assumption that $t_1 \geq t_2+2$, we have that $(t_1-1)/m > t_2/m$ such that, for every $r$,
$$\Pr_{v\sim F}\left[\max_{a\in A}\sw(a,v)\geq r|\overline{M}(r),C(t')\right] > \Pr_{v\sim F}\left[\max_{a\in A}\sw(a,v)\geq r|\overline{M}(r),C(t)\right]$$
which concludes the proof.
\end{proof}

We are now ready to prove Lemma~\ref{lem:balanced}. Starting from any vector $t$ with non-negative integer entries and applying Lemma~\ref{lem:t-t-primes} repeatedly, we obtain that
\begin{align*}
&\E_{v\sim F}\left[\max_{a\in A}{\sw(a,v)}|X_i(v)=t_i, i\in [n]\right] \\
&\leq
\E_{v\sim F}\left[\max_{a\in A}{\sw(a,v)}|X_i(v)\in \{\lfloor s/n \rfloor, \lceil s/n \rceil\}, i\in [n], X(v)=s\right]\\
&= \E_{v\sim F}\left[\max_{a\in A}{\sw(a,v)}| \balanced(v,s)\right].
\end{align*}
Then,
\begin{align*}
&\E_{v\sim F}\left[\max_{a\in A}{\sw(a,v)}|X(v)=s\right] \\
&= \sum_{\substack{t=(t_1, ..., t_n):\\\sum_{i\in [n]}{t_i}=s}}{\E_{v\sim F}\left[\max_{a\in A}{\sw(a,v)}|X_i(v)=t_i, i\in [n]\right] \cdot \Pr_{v\sim F}\left[X_i(v)=t_i, i\in [n]|X(v)=s\right]}\\
&\leq \sum_{\substack{t=(t_1, ..., t_n):\\\sum_{i\in [n]}{t_i}=s}}{\E_{v\sim F}\left[\max_{a\in A}{\sw(a,v)}| \balanced(v,s)\right]\cdot \Pr_{v\sim F}\left[X_i(v)=t_i, i\in [n]|X(v)=s\right]}\\
&=\E_{v\sim F}\left[\max_{a\in A}{\sw(a,v)}| \balanced(v,s)\right],
\end{align*}
implying the lemma.
\qed

\section{Proof of Lemma~\ref{lem:j-bound-case-1}}\label{app:sec:lem:j-bound-case-1}
In the proof of Lemmas~\ref{lem:j-bound-case-1} and \ref{lem:t-bound-case-2}, we will use the following simple claim.

\begin{claim}\label{claim:simple-technical-claim}
Let $k$ be a positive integer, $S$ and $S'$ sets of agents with $|S|\leq |S'|$, and $T_i$ a set of $k$ consecutive positions in agent $i\in S$. Then, 
\begin{align*}
\E_{\substack{v\sim F\\P\sim \PP(v)}}\left[\max_{a\in A}{\sum_{i\in S}{\one\{\pos_{\succ_i}(a)\in T_i\}}}\right] \leq 
\E_{\substack{v\sim F\\P\sim \PP(v)}}\left[\max_{a\in A}{\sum_{i\in S'}{\one\{\pos_{\succ_i}(a)\leq k\}}}\right] 
\end{align*}
\end{claim}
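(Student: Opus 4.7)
The plan is to reduce the LHS to the same form as the RHS by exploiting the fact that, under the stochastic model, each agent's ranking is marginally a uniformly random permutation of $A$, and the rankings of different agents are independent. The proof then proceeds in two steps: first an equality that replaces the arbitrary consecutive windows $T_i$ by the common top-$k$ window, and second a monotonicity inequality that enlarges $S$ to $S'$.

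For the first step, I would argue as follows. Since values $\{\val_i(a)\}_{a \in A}$ are i.i.d.\ from $F$ and ties in the induced ranking $\succ_i$ are broken uniformly at random, the ranking $\succ_i$ is, marginally, uniform over all permutations of $A$. Moreover, because the valuations are independent across agents, the rankings $\{\succ_i\}_{i \in N}$ are jointly independent. Now, for any fixed size-$k$ set of positions $T_i$ and a uniformly random permutation $\succ_i$, the random set $\{a\in A : \pos_{\succ_i}(a)\in T_i\}$ is a uniformly random size-$k$ subset of $A$; in particular, its distribution does not depend on which $k$ positions make up $T_i$. Combined with independence across agents, this yields that the joint law of $\bigl\{\one\{\pos_{\succ_i}(a)\in T_i\}\bigr\}_{a\in A,\, i\in S}$ coincides with the joint law of $\bigl\{\one\{\pos_{\succ_i}(a)\leq k\}\bigr\}_{a\in A,\, i\in S}$. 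Hence
$$\E\left[\max_{a\in A}\sum_{i\in S}\one\{\pos_{\succ_i}(a)\in T_i\}\right] = \E\left[\max_{a\in A}\sum_{i\in S}\one\{\pos_{\succ_i}(a)\leq k\}\right].$$

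For the second step, since $|S|\leq |S'|$, pick any $S''\subseteq S'$ with $|S''|=|S|$. Because the rankings are i.i.d.\ across agents, the distribution of the right-hand side above is unchanged if $S$ is replaced by $S''$. Finally, adding the non-negative indicators for the agents in $S'\setminus S''$ can only increase the sum inside the maximum (pointwise), so
$$\E\left[\max_{a\in A}\sum_{i\in S''}\one\{\pos_{\succ_i}(a)\leq k\}\right] \leq \E\left[\max_{a\in A}\sum_{i\in S'}\one\{\pos_{\succ_i}(a)\leq k\}\right],$$
and chaining the two displays yields the claim.

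There is essentially no hard step here; the only subtlety is to justify cleanly that $\succ_i$ is marginally uniform over permutations even though its distribution is defined via the valuations and a consistency-respecting random tie-break. This follows from the exchangeability of the i.i.d.\ values $\{\val_i(a)\}_{a\in A}$ together with the symmetry of the tie-breaking procedure, and once this is in hand the "fixed window can be replaced by the top window" observation is immediate.
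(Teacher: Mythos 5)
Your proof is correct and rests on the same key observation as the paper's one-paragraph justification, namely that under the impartial-culture model each agent's ranking is marginally uniform and rankings are independent across agents, so the location of the length-$k$ window within a ranking is distributionally irrelevant. You simply flesh out the paper's terse remark ("the probability that an alternative appears in any position of any agent's ranking is $1/m$") into an explicit two-step argument: an equality in distribution for the indicator arrays when the windows $T_i$ are replaced by the top-$k$ window, followed by the pointwise monotonicity in the agent set; this is a more careful presentation of the same idea rather than a different route.
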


To see why the claim holds, observe that the expectations are defined over uniformly random profiles. Hence, the probability that an alternative appears in any position of any agents' ranking is equal to $1/m$, i.e., independent from both the agent and the position. 

Returning to the proof of Lemma~\ref{lem:j-bound-case-1}, 
define the sets of agents $N_1=\{1, ..., \lfloor n/2\rfloor\}$, $N_2=\{1, ..., \lfloor n/2\rfloor+1, ..., 2\lfloor n/2 \rfloor\}$, and $N_3=N\setminus (N_1\cup N_2)$. We have \begin{align*}
&\E_{v\sim F}\left[\max_{a\in A}{\sw(a,v)}|\balanced(v,jn\tau)\right]\\
&= \E_{\substack{v\sim F\\P\sim \PP(v)}}\left[\max_{a\in A}{\sum_{i\in N}{\one\{\pos_{\succ_i}(a)\leq j\tau\}}}\right]\\
&= \E_{\substack{v\sim F\\P\sim \PP(v)}}\left[\max_{a\in A}{\sum_{\ell\in [3]}{\sum_{i\in N_\ell}}\sum_{k=1}^j{\one\{(k-1)\tau <\pos_{\succ_i}(a)\leq k\tau\}}}\right]\\
&\leq \sum_{\ell\in [3]}\sum_{k=1}^j\E_{\substack{v\sim F\\P\sim \PP(v)}}\left[\max_{a\in A}{{\sum_{i\in N_\ell}}{\one\{(k-1)\tau <\pos_{\succ_i}(a)\leq k\tau\}}}\right]\\
&\leq \sum_{\ell\in [3]}\sum_{k=1}^j\E_{\substack{v\sim F\\P\sim \PP(v)}}\left[\max_{a\in A}{{\sum_{i\in N_1}}{\one\{\pos_{\succ_i}(a)\leq \tau\}}}\right]\\
&= 3j\cdot \E_{v\sim F}\left[\max_{a\in A}{\sw(a,v)}|\boxv(v)\right]=3j\cdot B.
\end{align*}
The first equality is due to the fact that the condition $\balanced(v,jn\tau)$ requires that the contribution to the social welfare of each alternative comes from positions from $1$ to $j\tau$ only. The first inequality uses a simple property of the maximum function and linearity of expectation, while the second one follows from Claim~\ref{claim:simple-technical-claim}. The last equality follows since the condition $\boxv(v)$ requires that the contribution to the social welfare of each alternative comes from positions from position $1$ to $\tau$ of exactly $\lfloor n/2 \rfloor$ agents. The last equality uses the definition of $B$.
\qed

\section{Proof of Lemma~\ref{lem:t-bound-case-2}}\label{app:sec:lem:t-bound-case-2}
Let $N_1=[t]$. Define the condition $\pbalanced(v,t,s)$ to be $Z(v,1)=t$, $\lfloor s/t\rfloor \leq X_i(v)\leq \lceil s/t \rceil$ for $i\in [N_1]$ and $X_i(v)=0$ for $i\in N\setminus N_1$. Using Lemma~\ref{lem:t-t-primes} in a similar way we used it to prove Lemma~\ref{lem:balanced}, we can show that
\begin{align*}
\E_{v\sim F}\left[\max_{a\in A}{\sw(a,v)}|Z(v,1)=t, X(v)=j\right] &\leq \E_{v\sim F}\left[\max_{a\in A}{\sw(a,v)}|\pbalanced(v,t,j)\right].
\end{align*}
So, we have 
\begin{align*}
&\E_{v\sim F}\left[\max_{a\in A}{\sw(a,v)}|Z(v,1)=t, X(v)=j\right]\\
&\leq \E_{v\sim F}\left[\max_{a\in A}{\sw(a,v)}|\pbalanced(v,t,j)\right]\\
&\leq \E_{v\sim F}\left[\max_{a\in A}{\sw(a,v)}|\pbalanced(v,t,\lceil j/t\rceil\cdot t)\right]\\
&= \E_{\substack{v\sim F\\P\sim \PP(v)}}\left[\max_{a\in A}{\sum_{i\in N_1}{\one\{\pos_{\succ_i}(a)\leq \lceil j/t\rceil\}}}\right]\\
&=\E_{\substack{v\sim F\\P\sim \PP(v)}}\left[\max_{a\in A}{\sum_{i\in N_1}\sum_{k=1}^{\lceil j/t\rceil}{\one\{\pos_{\succ_i}(a)=k\}}}\right]\\
&\leq \sum_{k=1}^{\lceil j/t\rceil}\E_{\substack{v\sim F\\P\sim \PP(v)}}\left[\max_{a\in A}{\sum_{i\in N_1}{\one\{\pos_{\succ_i}(a)=k\}}}\right]\\
&\leq \lceil j/t \rceil \cdot \E_{\substack{v\sim F\\P\sim \PP(v)}}\left[\max_{a\in A}{\sum_{i\in N_1}{\one\{\pos_{\succ_i}(a)=1\}}}\right]\\
&= \lceil j/t \rceil \cdot\E_{\substack{v\sim F\\P\sim \PP(v)}}\left[\max_{a\in A} \impliedsw(a,P,v,1)|Z(v,1) = t\right].
\end{align*}
The second inequality holds since the condition $\pbalanced(v,t,\lceil j/t\rceil \cdot t)$ implies the condition $\pbalanced(v,t,j)$ and the quantity $\E[\max_{a\in A}{\sw(a,v)}|\pbalanced(v,t,j)]$ is non-decreasing in $j$. The first equality is due to the fact that condition $\pbalanced(v,jn\tau)$ requires that the contribution to the social welfare of each alternative comes from positions $1$ to $\lceil j/t\rceil$ of $t$ agents only. Next, the fourth inequality follows from a simple property of the maximum function and linearity of expectation, while the fifth inequality follows from Claim~\ref{claim:simple-technical-claim}. Finally, for the last inequality, observe that under the condition $Z(v,1)=t$, the implied social welfare of each alternative comes from the top position of a set of $t$ agents.
\qed

\section{Two comments regarding mechanism \med\ and its analysis}\label{app:sec:comments}
We devote this section to discussing two issues related to mechanism \med. First, notice that \med\ queries the value of the top-ranked alternative in each agent for $p<2/m$ and an alternative at a lower position in the agents' rankings otherwise. Let us consider the simpler variant of \med\ which always queries the value of each top-ranked alternative and returns the one of maximum implied social welfare. Our Lemma~\ref{lem:log-over-loglog} below shows that this variant of \med\ has super-constant distortion when ties in implied social welfare are resolved arbitrarily, that is, ignoring the content of the profile below the top position in each ranking. 

Our proof uses a profile with $m$ alternatives, $n=m^{1/3}$ agents, and $p=m^{-1/3}$. In this way, there are, on average, $m^{2/3}$ alternatives that have a value of $1$ in each agent, but the mechanism recovers very little information about which alternatives do get a value of $1$. In particular, our choice of parameters $n$, $m$, and $p$ guarantees that all top-ranked alternatives are different with high probability. Then, the alternative selected among them by the mechanism has only constant expected social welfare. In contrast, the parameters are such that the expected maximum social welfare is $\Omega\left(\frac{\log{m}}{\log\log{m}}\right)$.

\begin{lemma}\label{lem:log-over-loglog}
Let $\mechanism$ be the mechanism that queries the value of the top-ranked alternative in every agent and returns an alternative that maximizes the implied social welfare (breaking ties arbitrarily). It holds that $\adist(\mechanism,\binarydist)\in\Omega\left(\frac{\log{m}}{\log\log{m}}\right)$.
\end{lemma}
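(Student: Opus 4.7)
The plan is to instantiate the setup with $n = \lceil m^{1/3}\rceil$ agents, the distribution $F_p \in \binarydist$ with $p = m^{-1/3}$, and to analyze $\mechanism$ under uniform random tie-breaking (a natural ``arbitrary'' default that does not use the profile in any sophisticated way). These parameters are chosen so that (i) the expected number of value-$1$ alternatives per agent is $mp = m^{2/3}$, making it essentially certain that each agent's top-ranked alternative has value $1$ and thus that the queries recover almost no information; (ii) $np = 1$, so the binomial tails governing each alternative's social welfare are heavy enough that $\max_a \sw(a, v)$ reaches $\Omega(\log m / \log\log m)$; and (iii) $n^2/m = m^{-1/3}$, so by the birthday bound the top-ranked alternatives $t_1, \ldots, t_n$ are pairwise distinct with probability $1 - o(1)$.

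For the expected maximum social welfare, the key structural observation is that the values $\val_i(a)$ are i.i.d.\ across both $i$ and $a$, so the random variables $(\sw(a, v))_{a \in A}$ are \emph{mutually independent} binomial$(n, p)$ variables. For a threshold $k$ with $k \leq n/2$, the bound $\Pr[\sw(a, v) \geq k] \geq \binom{n}{k} p^k (1-p)^{n-k} = \Omega(1/k!)$ follows from $\binom{n}{k} \geq (n/2)^k / k!$, $(np)^k = 1$, and $(1-p)^{n-k} = \Omega(1)$. Setting $k^* = \lfloor c \log m /\log \log m \rfloor$ for a small constant $c>0$ (chosen via Stirling so that $k^*! \leq m$), one obtains $m \cdot \Pr[\sw(a,v) \geq k^*] \geq 1$. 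Independence across alternatives then yields $\Pr[\max_a \sw(a,v) < k^*] = (1 - \Pr[\sw(a,v)\geq k^*])^m \leq e^{-m \cdot \Pr[\sw(a,v)\geq k^*]} \leq e^{-1}$, so $\E_{v \sim F_p}[\max_a \sw(a,v)] \geq (1-1/e)\, k^* = \Omega(\log m /\log\log m)$.

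For the expected social welfare of $\mechanism$, I would define the event $E$ that $t_1, \ldots, t_n$ are pairwise distinct and that every $\val_i(t_i) = 1$. The birthday bound $\binom{n}{2}/m = O(m^{-1/3})$ combined with the union bound $n \cdot (1-p)^m \leq n \cdot e^{-m^{2/3}}$ gives $\Pr[\bar{E}] = O(m^{-1/3})$. Under $E$, every $t_i$ has implied social welfare exactly $1$ (the maximum attained), so the $\argmax$ set of implied SW is $\{t_1, \ldots, t_n\}$ and uniform-random tie-breaking selects each $t_j$ with probability $1/n$. Therefore $\E[\sw(\mechanism(P,v), v)\,\one\{E\}] \leq \tfrac{1}{n}\sum_j \E[\sw(t_j, v)] = \E[\sw(t_1, v)]$ by symmetry across agents. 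The crucial computation $\E[\sw(t_1, v)] = \sum_i \E[\val_i(t_1)] \leq 1 + (n-1) p \leq 2$ uses that for $i \neq 1$, the alternative $t_1$ is a function of agent~$1$'s data alone and is thus independent of $\val_i$, so $\E[\val_i(t_1)] = p$. Combining with the trivial bound $\sw(\mechanism(P,v), v) \leq n$ on $\bar{E}$ and $n \cdot \Pr[\bar{E}] = O(1)$ yields $\E[\sw(\mechanism(P,v), v)] = O(1)$. Taking the ratio then gives $\adist(\mechanism, \binarydist) = \Omega(\log m / \log\log m)$.

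The main technical obstacle is the tail estimate in the second step: one must verify that $k^* = \Theta(\log m /\log\log m)$ remains sufficiently smaller than $n = m^{1/3}$ for the simplification $\binom{n}{k^*} \geq (n/2)^{k^*}/k^*!$ to be valid, and that the constants can be chosen so that $(1-p)^{n-k^*}$ is bounded below by a constant. The remaining ingredients---the birthday bound, the $(n-1)p$ independence computation, and the use of independence across alternatives to convert $\Pr[\max < k^*]$ into a product of failure probabilities---are routine once the structural independence of $(\sw(a, v))_{a \in A}$ has been identified.
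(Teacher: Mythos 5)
Your proposal is correct and follows essentially the same route as the paper's proof: the same choice $n\approx m^{1/3}$, $p=m^{-1/3}$, the same binomial tail estimate yielding $\E[\max_a\sw(a,v)]=\Omega(\log m/\log\log m)$, and the same ``distinct top choices'' event used to cap the mechanism's expected welfare at $O(1)$. One small point worth flagging: your intermediate claim that $\binom{n}{k}p^k(1-p)^{n-k}=\Omega(1/k!)$ is off---the $(n/2)^k/k!$ bound gives $\Omega(2^{-k}/k!)$, not $\Omega(1/k!)$---though since $\log(2^k k!)\sim k\log k$ the threshold $k^*=\Theta(\log m/\log\log m)$ and the rest of the argument are unaffected. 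Your explicit choice of uniform random tie-breaking, and the resulting clean bound $\E[\sw(\mechanism)\one\{E\}]\leq\frac1n\sum_j\E[\sw(t_j,v)]=\E[\sw(t_1,v)]\leq 1+(n-1)p$, is in fact a touch more careful than the paper's phrasing, which appeals informally to ``the probability that an alternative gets a value of $1$ from an agent, given that it is not top-ranked,'' without spelling out why the winner can be decoupled from the other agents' valuations.
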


\begin{proof}
Let $n\geq 2$, $m = n^3$, and consider the binary distribution $F_p$ with $p = 1/n = m^{-1/3}$. We first lower-bound the probability that all alternatives appearing in the first position of any agent's ranking are different. For a set of binary valuations $v$, we refer to this condition as $\distinct(v)$. Recall that, in an impartial culture electorate, the top-ranked alternative of each agent is, in effect, uniformly random and independent from the top-ranked alternatives of the remaining agents. We thus have that
\begin{equation}\label{eqn:all_top_ranked_distinct}
\Pr_{v\sim F_p}[\distinct(v)]
= \prod_{i=0}^{n-1}\frac{m-i}{m}
= \prod_{i=0}^{m^{1/3}-1}\left(1-\frac{i}{m}\right)
\geq \left(1-m^{-2/3}\right)^{m^{1/3}}
\geq 1 - m^{-1/3}.
\end{equation}
Here, we used the fact that $(1+x)^r \geq 1+rx$ for $x\geq -1, r\in\R\setminus(0,1)$.

Now, notice that the probability that an alternative gets a value of $1$ from an agent, given that it is not top-ranked by this agent, is at most $p$. Then, the expected social welfare of the alternative returned by mechanism $\mechanism$ under the condition $\distinct(v)$ is at most $1+p(n-1)\leq 2$. Thus,
\begin{align}\nonumber
\E_{\substack{v\sim F_p\\ P \sim \PP(v)}}\left[\sw(\mechanism(P,v), v)\right] & \leq \Pr_{v\sim F_p}\left[\distinct(v)\right]\cdot \E_{\substack{v\sim F_p\\P\sim \PP(v)}}\left[\sw(\mechanism(P,v),v)|\distinct(v)\right]\\\nonumber
&\quad \quad +(1-\Pr_{v\sim F_p}\left[\distinct(v)\right])\cdot n\\\label{eqn:exp_social_welfare_top_ranked_query_algo}
&\leq 2 + m^{-1/3}\cdot n=3.
\end{align}
The first inequality follows from the law of total expectation together with the observation that the expected social welfare of any alternative is trivially upper-bounded by $n$ under binary valuations. For the second inequality, we use inequality~(\ref{eqn:all_top_ranked_distinct}). 

We complete the proof by showing that the expected maximum social welfare is at least $\Omega\left(\frac{\log{m}}{\log\log{m}}\right)$. We do so by proving that the probability that the maximum social welfare is at least $\left\lfloor \frac{\log{m}}{\log\log{m}}\right\rfloor$ is lower-bounded by a constant. First, notice that
\begin{align*}
\Pr_{v\sim F_p}[\sw(a,v)\geq k]
&\geq \Pr_{v\sim F_p}[\sw(a,v) = k]
 = \binom{n}{k}\cdot p^k\cdot \left(1- p\right)^{n-k}\\
& \geq \left(\frac{m^{1/3}}{k}\right)^k \cdot m^{-k/3}\cdot \left(1- m^{-1/3}\right)^{m^{1/3}} = \frac{\left(1- m^{-1/3}\right)^{m^{1/3}}}{k^k},
\end{align*}
where the second inequality follows since $\binom{n}{k}\geq \left(\frac{n}{k}\right)^k$. We now observe that $(1- m^{-1/3})^{m^{1/3}}$ is strictly increasing in $m$ approaching $e^{-1}$ from below. For $n\geq 2$, we have that $m\geq 8$ and, thus, $(1- m^{-1/3})^{m^{1/3}} \geq\frac14$. This yields
$\Pr_{v\sim F_p}[\sw(a,v)\geq k]\geq\frac{1}{4k^k}$
and, hence
\begin{align}\label{eqn:pr_no_alternative_with_sw_k}
\Pr_{v\sim F_p}\left[\max_{a\in A}{\sw(a,v)}\geq k\right] &\geq 1- \left(1-\frac{1}{4k^k}\right)^m\geq 1-\exp\left(-\frac{m}{4k^k}\right),
\end{align}
where the last inequality is due to the fact that $e^x\geq 1+x$ for any real $x$. By selecting $k=\left\lfloor \frac{\log{m}}{\log\log{m}}\right\rfloor$, we have $k^k\leq m$ and 
\begin{align*}
    \Pr_{v\sim F_p}\left[\max_{a\in A}{\sw(a,v)}\geq \left\lfloor \frac{\log{m}}{\log\log{m}}\right\rfloor\right] &\geq 1-e^{-1/4}> 0.22,
\end{align*}
as desired.
\end{proof}

One may still wonder whether concentration inequalities like Chernoff bounds could replace (parts of) our analysis of mechanism \med. Intuitively, if the expected social welfare of each alternative is high (e.g., $n p\in \Omega(\log{m})$), then the social welfare of all alternatives will be sharply concentrated around this expectation, and the expected maximum and expected minimum social welfare will only be a constant factor apart. This would imply constant average distortion for {\em all} mechanisms, including those that make no queries at all. We include a formal proof of this fact as Lemma~\ref{lemma:const_av_dist_binary} below. Unfortunately, the assumption that $n p \in \Omega(\log{m})$ required by the lemma does not subsume any of the three cases in our analysis of mechanism \med\ in Section~\ref{sec:1-query}. Furthermore, we do not see how to extend the use of concentration inequalities to a broader range of parameters (e.g., satisfying $n p \in o(\log{m})$) where we have proven that queries are necessary.

\begin{lemma}\label{lemma:const_av_dist_binary}
Any voting rule has average distortion at most $13$ in impartial culture electorates with $n$ agents and $m$ alternatives, and underlying values drawn from a binary distribution $F_p$ such that $n\cdot p \geq 8\ln{(2m)}$.
\end{lemma}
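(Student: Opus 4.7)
The plan is to exploit Chernoff concentration: when $np \geq 8\ln(2m)$, the social welfare of every alternative is sharply concentrated around $np$, so the expected maximum and the social welfare of whichever alternative a mechanism picks are both within constant factors of $np$. The key observation is that for any fixed alternative $a$, the values $\val_i(a)$ are i.i.d.\ Bernoulli$(p)$, so $\sw(a,v)=\sum_{i\in N}\val_i(a)$ is Binomial$(n,p)$ with mean $np$. This lets us apply the standard multiplicative Chernoff bounds to each alternative and then take a union bound over the $m$ alternatives.

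For the denominator, I would apply the Chernoff lower tail with $\delta=1/2$ to obtain $\Pr[\sw(a,v)\leq np/2]\leq \exp(-np/8)\leq 1/(2m)$, using $np\geq 8\ln(2m)$. A union bound over the $m$ alternatives then gives
\[
\Pr_{v\sim F_p}\!\Big[\,\exists a\in A:\ \sw(a,v)< np/2\,\Big]\ \leq\ \tfrac{1}{2}.
\]
Since on the complementary event every alternative (hence in particular $\mechanism(P,v)$, for any deterministic or randomized mechanism) has social welfare at least $np/2$, I get $\E[\sw(\mechanism(P,v),v)]\geq (np/2)\cdot(1/2)=np/4$.

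For the numerator, I would use the Chernoff upper tail in the form $\Pr[\sw(a,v)\geq(1+\delta)np]\leq\exp(-\delta np/3)$ valid for $\delta\geq 1$ (a clean consequence of $\delta^2/(2+\delta)\geq \delta/3$). Combined with a union bound, for $t=(1+\delta)np$ with $\delta\geq 1$ this yields $\Pr[\max_a\sw(a,v)>t]\leq m\exp(-\delta np/3)$. Using the layer-cake identity and a change of variable $t=(1+\delta)np$,
\[
\E_{v\sim F_p}\!\Big[\max_{a\in A}\sw(a,v)\Big]\ \leq\ 2np\ +\ \int_{1}^{\infty} m\exp(-\delta np/3)\cdot np\,d\delta\ =\ 2np + 3m\exp(-np/3).
\]
The assumption $np\geq 8\ln(2m)$ makes the tail term at most $3m\cdot(2m)^{-8/3}\leq 3\cdot 2^{-8/3}< 1$, so $\E[\max_a\sw(a,v)]\leq 2np+1$. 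Dividing by the denominator bound yields an average distortion of at most $(2np+1)/(np/4)=8+4/(np)\leq 8+1/(2\ln 2)<13$, which is the claimed bound.

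\paragraph{Main obstacle.} There is no serious technical obstacle; the argument is a direct two-sided Chernoff calculation. The only thing to be careful about is that for randomized mechanisms one must ensure the denominator bound is derived from an event that is independent of the mechanism's internal randomness, which is automatic here because the event $\{\forall a:\sw(a,v)\geq np/2\}$ depends only on $v$. Choosing the constants is what makes the final numerical bound $13$ come out cleanly; slightly different $\delta$'s would give slightly different constants but the same qualitative statement.
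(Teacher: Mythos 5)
Your proof is correct and follows essentially the same route as the paper's: two-sided Chernoff bounds on the Binomial social welfare of each alternative, a union bound over the $m$ alternatives, the layer-cake identity for the numerator, and a one-half-probability lower bound on the denominator via the lower tail. The only cosmetic differences are the constants in the simplified Chernoff exponents (you use $\delta\geq 1$ with $\delta^2/(2+\delta)\geq\delta/3$ where the paper uses $\delta\geq 2$ with $\geq\delta/2$, and you split the layer-cake integral at $2np$ rather than $3np$), and you bound $\E[\sw(\mechanism(P,v),v)]$ directly rather than through $\E[\min_a\sw(a,v)]$; both give a ratio comfortably below $13$.
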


\begin{proof}
We will show that returning any alternative (including the one of minimum social welfare) yields an average distortion of at most $13$. We will use the upper tail Chernoff bound (Lemma~\ref{lem:chernoff}) as well as its next lower tail version.
\begin{lemma}[Chernoff bound, lower tail]\label{lem:chernoff-lower}
For every binomial random variable $Q$ and any $\delta\in [0,1]$, we have
\begin{align*}
\Pr[Q\leq (1-\delta)\E[Q]] \leq \exp\left(-\frac{\delta^2\E[Q]}{2}\right).
\end{align*}
\end{lemma}

For valuations $v$ drawn according to $F_p$ and an alternative $a\in A$, let $Q_i(v)$ be a random variable that indicates whether agent $i$ has value $1$ (then, $Q_i(v) = 1$) or $0$ (then, $Q_i(v) = 0$) for this alternative. Define $Q(v) = \sum_{i\in N} Q_i(v)$ and $\mu = \E_{v\sim F_p}[Q(v)]$. Thereby, $Q$ is the social welfare of alternative $a$, and $\mu$ is its expected value. By our assumption, it holds that $\mu = n\cdot p \geq 8\ln{(2m)}$.

Notice that for $\delta\geq 2$, we have $\frac{\delta^2}{2+\delta}\geq \frac{\delta}{2}$ and Lemma~\ref{lem:chernoff} now implies that
\begin{align}\label{eq:simpler-chernoff-ineq}
    \Pr_{v\sim F_p}[Q(v)\geq (1+\delta)\mu]&\leq \exp\left(-\frac{\delta \mu}{2}\right).
\end{align}
For any $t\geq 3\mu$, we use the union bound and apply inequality (\ref{eq:simpler-chernoff-ineq}) with $\delta=\frac{t}{\mu}-1\geq 2$ to obtain 
\begin{align}\label{eq:simpler-chernoff-ineq-2}
\Pr_{v\sim F_p}\left[\max_{a\in A} \sw(a,v)\geq t\right]
&\leq m\cdot \Pr_{v\sim F_p}[Q(v)\geq t] \leq  m\cdot\exp\left(-\frac{t - \mu}{2}\right).
\end{align}
By the definition of the expectation and using inequality (\ref{eq:simpler-chernoff-ineq-2}), it holds that
\begin{align}\nonumber
\E_{v\sim F_p}\left[\max_{a\in A}\sw(a,v)\geq t\right]
&= \int_0^\infty \Pr_{v\sim F_p}\left[\max_{a\in A}\sw(a,v)\geq t\right] dt\\\nonumber
&\leq \int_0^{3\mu} dt + \int_{3\mu}^\infty \Pr_{v\sim F_p}\left[\max_{a\in A}\sw(a,v)\geq t\right] dt\\\nonumber
&\leq 3\mu + m\int_{3\mu}^\infty \exp\left(-\frac{t-\mu}{2}\right) dt\\\label{eq:simpler-chernoff-ineq-3}
&= 3\mu+2m\exp(-\mu)\leq 3\mu+(2m)^{-7},
\end{align}
where the last inequality follows since $\mu\geq 8\ln{(2m)}$. 

Now, using again the union bound and the lower tail Chernoff bound (Lemma~\ref{lem:chernoff-lower}), we get
\begin{align*}
    \Pr_{v\sim F_p}\left[\min_{a\in A}{\sw(a,v)}\leq \frac{\mu}{2}\right] &\leq m\cdot \Pr_{v\sim F_p}\left[Q(v)\leq \frac{\mu}{2}\right] \leq m\cdot \exp\left(-\frac{\mu}{8}\right) \leq \frac12.
\end{align*}
Thus,
\begin{align}\label{eq:simpler-chernoff-ineq-4}
\E_{v\sim F_p}\left[\min_{a\in A}{\sw(a,v)}\right] &\geq \frac{\mu}{2}\cdot \Pr_{v\sim F_p}\left[\min_{a\in A}{\sw(a,v)}\geq \frac{\mu}{2}\right]> \frac{\mu}{4}.
\end{align}
By inequalities (\ref{eq:simpler-chernoff-ineq-3}) and (\ref{eq:simpler-chernoff-ineq-4}), we obtain that the ratio between $\E_{v\sim F_p}\left[\max_{a\in A}{\sw(a,v)}\right]$ and $\E_{v\sim F_p}\left[\min_{a\in A}{\sw(a,v)}\right]$ and, consequently, the distortion of any mechanism is at most $13$, as desired.
\end{proof}

\end{document}